\newtheorem{theorem}{Theorem}[section]
\newtheorem{lemma}[theorem]{Lemma}
\theoremstyle{definition}
\theoremstyle{remark}
\numberwithin{equation}{section}
\newcommand{\abs}[1]{\lvert#1\rvert}
\newcounter{step} 
\DeclareMathOperator{\supp}{supp}
\DeclareMathOperator{\Tr}{Tr}
\def\bq{\begin{eqnarray}}
\def\eq{\end{eqnarray}}
\def\bqq{\begin{align*}}
\def\eqq{\end{align*}}
\def\nn{\nonumber}
\def\eps{\varepsilon}
\newcommand{\norm}[1]{\left\lVert #1 \right\rVert}
\newcommand\1{{\ensuremath {\mathds 1} }}
\newcommand*\dotv{{}\cdot{}}
\def\R {\mathbb{R}}
\def\cE {\mathcal{E}}
\def\cR{\mathcal{R}}
\def\R {\mathbb{R}}
\def\d{{\, \rm d}}
\begin{document}

\title[A proof of the ionization conjecture in M\"uller theory]{A short proof of the ionization conjecture in M\"uller theory}

\author{Rupert L. Frank}
\address[R. L. Frank]{Mathematisches Institut, Ludwig-Maximilans Universit\"at M\"unchen, Theresienstr. 39, 80333 M\"unchen, Germany, and Mathematics 253-37, Caltech, Pasadena, CA 91125, USA}
\email{rlfrank@caltech.edu}
\thanks{The first author was supported in part by U.S. NSF Grant DMS-1363432.}

\author{Phan Th\`anh Nam}
\address[P.T. Nam]{Department of Mathematics and Statistics, Masaryk University, Kotl\'a\v rsk\'a 2, 611 37 Brno, Czech Republic}
\email{ptnam@math.muni.cz}

\author{Hanne Van Den Bosch}
\address[H. Van Den Bosch]{Instituto de F\'{i}sica, Pontificia Universidad Cat\'olica de Chile, Av. Vicu\~na Mackenna 4860, Santiago, Chile}
\email{hannevdbosch@fis.puc.cl}
\thanks{The third author was supported in part by Conicyt (Chile) through CONICYT--PCHA/Docto\-ra\-do Nacional/2014 Project \# 116--0856 and Iniciativa Cient\'i\-fica Milenio (Chile) through Millenium Nucleus RC--120002 ``F\'isica Matem\'a\-tica''.}

\subjclass{81V45}


\keywords{Maximal ionization, M\"uller density-matrix-functional theory}

\begin{abstract}
We prove that in M\"uller theory, a nucleus of charge $Z$ can bind at most $Z+C$ electrons for a constant $C$ independent of $Z$. 
\end{abstract}

\maketitle



\section{Introduction}

In M\"uller theory \cite{Mueller-84}, the energy of an atom is given by the functional
$$
\cE^{\rm M}(\gamma) = \Tr (-\Delta \gamma) - \int_{\R^3} \frac{Z\rho_\gamma(x)}{|x|} \d x + D(\rho_\gamma) - X(\gamma^{1/2}).
$$
Here $\gamma$ is the density matrix of the electrons and $\rho_\gamma(x)=\gamma(x,x)$ is its density. The Coulomb repulsion between the electrons is modeled by 
$$
D(\rho_\gamma)  = \frac{1}{2} \iint_{\R^3\times \R^3} \frac{\rho_\gamma(x) \rho_\gamma(y) }{|x-y|} \d x \d y
$$
and the exchange energy is described by
$$
X(\gamma^{1/2}) = \frac{1}{2} \iint_{\R^3\times \R^3} \frac{|\gamma^{1/2}(x,y)|^2 }{|x-y|} \d x \d y.
$$
The ground state energy is then given by
\begin{equation}
\label{eq:muellermin}
E^{\rm M}(N)= \inf \left\{ \cE^{\rm M}(\gamma)\,|\, 0\le \gamma\le 1\text{ on } L^2(\R^3), \Tr \gamma=N \right\}.
\end{equation}
Here we ignore the electron spin for the sake of simplicity. Moreover, for our mathematical treatment we do not need to assume that the parameters $Z>0$ (the nuclear charge) and $N>0$ (the number of electrons) are integers.

M\"uller theory is a modification of Hartree--Fock theory, where the usual exchange energy $X(\gamma)$ is replaced by $X(\gamma^{1/2})$. On one hand, like Hartree--Fock theory \cite{Bach-92}, M\"uller theory correctly reproduces the Scott and Dirac--Schwinger corrections to Thomas--Fermi theory; see \cite{Siedentop-09}. On the other hand, unlike the Hartree--Fock functional, the M\"uller functional is convex \cite{FraLieSieSei-07} and this leads to various mathematical simplifications. In particular, it follows from the discussion in \cite[Subsection I.C]{FraLieSieSei-07} that the density of any minimizer (if it exists) is radially symmetric.

In \cite{FraLieSieSei-07}, it was shown that the M\"uller functional has a minimizer if $N\le Z$, and it was conjectured that there is no minimizer if $N>N_c(Z)$ for a critical electron number $N_c(Z)<\infty$. As pointed out in \cite{FraLieSieSei-07}, in M\"uller theory some electrons may form a nontrivial bound state at infinity, and therefore it is unclear how to apply the standard method of ``multiplying the Euler-Lagrange equation by $|x|$" by Benguria and Lieb \cite{Benguria-79,Lieb-81b,Lieb-84}.

In \cite{FraNamBos-16b}, we used a different method to justify this conjecture and proved

\begin{theorem}\label{main}
There is a constant $C>0$ such that for all $Z>0$, the M\"uller variational problem \eqref{eq:muellermin} has no miminizer if $N> Z+C$.
\end{theorem}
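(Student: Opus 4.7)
The plan is to derive a contradiction from the assumption that $\gamma$ is a minimizer with $\Tr\gamma = N > Z + C$ for $C$ sufficiently large. The strategy is in the spirit of the Benguria--Lieb and Solovej approach to ionization bounds, adapted to the M\"uller functional.

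First I would establish the Euler--Lagrange inequality. By the convexity of $\cE^{\rm M}$ \cite{FraLieSieSei-07}, at a minimizer there is a Lagrange multiplier $\mu\le 0$ and an effective one-body operator
\[
 h_\gamma = -\Delta - \frac{Z}{|x|} + \rho_\gamma * \frac{1}{|\cdot|} - X_\gamma,
\]
where $X_\gamma$ is an exchange-type operator with kernel built from $\gamma^{1/2}$, satisfying $h_\gamma\ge \mu$ on $\Ran(1-\gamma)$ and $h_\gamma\le\mu$ on $\Ran(\gamma)$. Some care is needed because formally differentiating $\gamma\mapsto X(\gamma^{1/2})$ produces a $\gamma^{-1/2}$ factor, so the EL condition should be derived by testing on spectral subspaces of $\gamma$ where inversion is safe.

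Second, I would exploit the radial symmetry of $\rho_\gamma$. Newton's theorem gives
\[
 V_\gamma(x) := -\frac{Z}{|x|} + \Bigl(\rho_\gamma * \frac{1}{|\cdot|}\Bigr)(x)
 = -\frac{Z - \int_{|y|\le|x|}\rho_\gamma(y)\,dy}{|x|}
 + \int_{|y|>|x|}\frac{\rho_\gamma(y)}{|y|}\,dy.
\]
Choosing a radius $R$ so that $\{|x|\le R\}$ already contains at least $Z+C/2$ electrons, one obtains $V_\gamma(x)\ge (C/2)/|x|$ for $|x|\ge R$. Since $\Tr\gamma=N<\infty$, the bulk of $\gamma$ is supported in a bounded region, so I would probe the EL inequality by a unit trial state $\varphi$ supported in an annulus $\{R<|x|<2R\}$, chosen essentially inside $\Ran(1-\gamma)$ and with small kinetic energy. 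The EL inequality then forces $\langle\varphi, h_\gamma\varphi\rangle \ge \mu$, and the goal is to show the left-hand side is positive and bounded below by $cC/R$.

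Third, the proof reduces to controlling the exchange contribution $\langle\varphi,X_\gamma\varphi\rangle$. Using Cauchy--Schwarz on the spectral representation, $|\gamma^{1/2}(x,y)|^2\le \rho_{\gamma^{1/2}}(x)\rho_{\gamma^{1/2}}(y)$, together with a Hardy-type estimate, one can bound this exchange in terms of a Coulomb energy of $\rho_{\gamma^{1/2}}$ which in turn is controlled by radial moments of $\rho_\gamma$ via Newton's theorem. The main obstacle will be precisely this step: unlike the Hartree--Fock exchange, $X_\gamma$ is built from $\gamma^{1/2}$, and $\Tr\gamma^{1/2}$ can be much larger than $\Tr\gamma=N$, so $X_\gamma$ carries longer-range attraction that could in principle cancel the repulsion produced by Newton's theorem. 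The \emph{short} character of the proof will hinge on a sharp interpolation---for instance between $X(\gamma^{1/2})$, the kinetic energy, and $D(\rho_\gamma)$---that absorbs the long-range exchange into the direct Coulomb repulsion rather than bounding it in isolation, so that feeding the resulting estimate into the EL inequality yields $C\le C_0$ universal, contradicting the largeness of $C$.
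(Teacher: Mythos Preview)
Your proposal follows the Benguria--Lieb/Euler--Lagrange route, but this is precisely the approach the paper explicitly sets aside: as noted in the introduction, in M\"uller theory electrons can form a nontrivial bound state at infinity, and it is unclear how to run the ``multiply by $|x|$'' argument. Your sketch does not overcome this obstruction. Concretely, the derivative of $\gamma\mapsto X(\gamma^{1/2})$ involves $\gamma^{-1/2}$, so the effective one-body operator you write down is not well-defined on all of $L^2$, and the EL relation only holds in a restricted sense that is hard to test against a generic trial state; moreover, a M\"uller minimizer typically has $0<\gamma<1$ on a large subspace, so there is no reason a trial state localized in an annulus can be chosen ``essentially inside $\Ran(1-\gamma)$''. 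Most importantly, you yourself identify the crux---the exchange built from $\gamma^{1/2}$ has long range because $\Tr\gamma^{1/2}$ is not controlled by $N$---and then resolve it only by hoping for ``a sharp interpolation'' that absorbs it into $D(\rho_\gamma)$. No such inequality is supplied, and none is known; this is exactly the gap that makes the EL approach fail here.

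The paper's proof takes a completely different route that bypasses the Euler--Lagrange equation entirely. It uses a binding inequality (splitting by a moving half-space, then averaging over directions and translations) to get a first crude bound $N\le 2Z+C(Z^{2/3}+1)$ and an exterior $L^1$-estimate for $\int_{|x|>r}\rho_0$ in terms of the screened charge $Z_r$. It then compares the exterior problem with Thomas--Fermi theory via semi-classical estimates, obtaining control of $D(\eta_r^2\rho_0-\rho_r^{\rm TF})$, and runs Solovej's bootstrap: an initial estimate $|Z_r-Z_r^{\rm TF}|\le C r^{-3+\epsilon}$ for $r\lesssim Z^{-1/3}$, followed by an iterative step extending this to $r$ of order one using the Sommerfeld asymptotics of TF theory. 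Since $\int\rho^{\rm TF}=Z$, the comparison at $r\sim 1$ yields $N\le Z+C$. The exchange term is handled not by pointwise or operator bounds on an EL equation, but energetically inside the binding inequality, where it contributes only as the global quantity $X(\gamma_0^{1/2})\le C Z^{5/3}$.
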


The proof of Theorem \ref{main} in \cite{FraNamBos-16b} is adapted from our previous work on Thomas--Fermi--Dirac--von Weizs\"acker theory \cite{FraNamBos-16}. It consists of two main ingredients. The first one is a new strategy to control the number of electrons far away from the nucleus, which is inspired by \cite{NamBos-16} and \cite{FraKilNam-16}. The second one is a comparison with Thomas--Fermi theory, following Solovej's fundamental work on Hartree--Fock theory \cite{Solovej-03}. In \cite{FraNamBos-16b}, we did not use the convexity of M\"uller funtional in order to illustrate the generality of our strategy. In fact, our proof has been generalized in \cite{Kehle-16} to cover a class of non-convex models between M\"uller and Hartree--Fock. 

In this short note, we will provide a shorter proof of Theorem \ref{main} by using the convexity of M\"uller functional and following Solovej's proof in reduced Hartree--Fock theory \cite{Solovej-91}. 

\subsection*{Acknowledgement}
The first and second author are grateful to the organizers of the QMath 13 conference and for the invitation to speak there.


\section{Exterior $L^1$-estimate}

Throughout the paper we will assume that $N\ge Z$ and that the variational problem $E^{\rm M}(N)$ has a minimizer $\gamma_0$. As mentioned before we know that the density $\rho_0=\rho_{\gamma_0}$ is radially symmetric. In many places we will use Newton's theorem
$$
\int_{|y|<|x|} \frac{\rho_0(y)}{|x-y|} \d y = \frac{1}{|x|} \int_{|y|<|x|} \rho_0(y) \d y.
$$

We start by proving a simple bound, which in particular verifies the conjecture in \cite{FraLieSieSei-07} that there is a critical electron number $N_c(Z)<\infty$.  

\begin{lemma} \label{lem:2Z} $ N \le 2Z + C(Z^{2/3}+1).$
\end{lemma}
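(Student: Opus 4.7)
The plan is to combine the energy minimality $\cE^{\rm M}(\gamma_0) \le \cE^{\rm M}(0) = 0$ with Hardy's inequality and Newton's theorem. Writing $T = \Tr(-\Delta \gamma_0)$, the minimality inequality rearranges as
\[
T + D(\rho_0) \le Z \int \frac{\rho_0(x)}{|x|}\, dx + X(\gamma_0^{1/2}).
\]

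I would bound the two right-hand-side terms by Cauchy--Schwarz combined with Hardy. For the Coulomb attraction, writing $\rho_0(x) = \int |\gamma_0^{1/2}(x,y)|^2\, dy$ and applying Hardy in $x$,
\[
Z\int \frac{\rho_0}{|x|}\,dx \le Z\sqrt{N}\left(\int \frac{\rho_0(x)}{|x|^2}\,dx\right)^{1/2} \le 2Z\sqrt{NT}.
\]
For the exchange, using the identity $T = \iint|\nabla_x\gamma_0^{1/2}(x,y)|^2\, dx\,dy$ and Hardy in the $x$-variable with origin shifted to $y$,
\[
X(\gamma_0^{1/2}) \le \tfrac{1}{2}\left(\iint \frac{|\gamma_0^{1/2}(x,y)|^2}{|x-y|^2}\, dx\, dy\right)^{\!1/2}\!\left(\iint |\gamma_0^{1/2}(x,y)|^2\, dx\, dy\right)^{\!1/2} \le \sqrt{NT}.
\]
Combining these three displays yields $T + D(\rho_0) \le (2Z+1)\sqrt{NT}$, hence $T \le (2Z+1)^2 N$ and $D(\rho_0) \le \tfrac{(2Z+1)^2}{4} N$.

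To extract the claimed inequality, I would invoke the radial symmetry of $\rho_0$ through Newton's theorem: $D(\rho_0) = \tfrac{1}{2}\int_0^\infty q(r)^2/r^2\, dr$ with $q(r) = \int_{|y|<r}\rho_0$, which gives a lower bound of the form $D(\rho_0) \ge cN^2/R$ where $R$ is an effective length scale containing most of the mass of $\rho_0$. Combining with a Jensen-type Lieb--Thirring lower bound $T \ge c N^{5/3}/R^2$ on the same ball and matching against the upper bounds above forces $N \le 2Z + C(Z^{2/3}+1)$.

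The main obstacle is this final balancing step: one must select $R$ and trade the kinetic, Coulomb, and direct terms carefully in order to produce both the leading coefficient $2Z$ (which morally comes from the Müller exchange effectively doubling the nuclear attraction seen at infinity, in the same spirit as Lieb's classical $N<2Z+1$ bound) and the Thomas--Fermi-scale subleading correction of order $Z^{2/3}$. A conceptually cleaner alternative, which avoids the multiscale balancing, would be to use the Euler--Lagrange equation for the Müller minimizer (well-defined thanks to the convexity of $\cE^{\rm M}$) combined with a Benguria--Lieb-style ``multiplication by $|x|$'' trick.
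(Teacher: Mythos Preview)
Your first two displays are correct and yield $T\le (2Z+1)^2N$ and $D(\rho_0)\le \tfrac14(2Z+1)^2N$, but the argument breaks down at precisely the point you flag as the ``main obstacle.'' The bounds you have are \emph{upper} bounds on $T$ and $D(\rho_0)$ that are linear in $N$; in order to force a bound on $N$ you would need \emph{lower} bounds on one of these quantities that grow faster than linearly in $N$. Your proposed lower bounds $D(\rho_0)\ge cN^2/R$ and $T\ge cN^{5/3}/R^2$ do not accomplish this, because $R$ is a free parameter: for any $N$ both inequalities are consistent with your upper bounds simply by taking $R$ large enough (namely $R\gtrsim N/(2Z+1)^2$). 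Nothing in your argument forbids this, since there is no a priori upper bound on the radius containing most of $\rho_0$. Indeed, even the heuristic lower bound $D(\rho_0)\ge cN^2/R$ is not valid in general: by Newton's theorem, for radial $\rho_0$ one has $D(\rho_0)\ge \tfrac12 q(R)\int_{|x|>R}\rho_0(x)|x|^{-1}\,dx$, but the last integral can be arbitrarily small compared to $N/R$ if the outer mass sits far away. So the balancing step is not a technicality---there is a genuine missing idea.

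Your suggested alternative, the Benguria--Lieb ``multiplication by $|x|$'' argument via the Euler--Lagrange equation, is exactly what the paper explicitly warns against: in M\"uller theory some electrons may form a nontrivial bound state at infinity, and it is unclear how to make that argument rigorous here. The paper instead uses a completely different mechanism, closer in spirit to the planar-cut method of \cite{FraKilNam-16}. One applies the binding inequality $\cE^{\rm M}(\gamma_0)\le \cE^{\rm M}(\chi_1\gamma_0\chi_1)+\cE^{\rm M}_{Z=0}(\chi_2\gamma_0\chi_2)$ with cutoffs $\chi_j(x)=g_j((\nu\cdot x-\ell)/s)$ depending on a direction $\nu\in\mathbb S^2$ and a level $\ell>0$, then integrates over $\ell$ and averages over $\nu$ using $\int_{\mathbb S^2}[\nu\cdot z]_+\,d\nu/(4\pi)=|z|/4$. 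The effect is that the direct term produces $\iint\rho_0(x)\rho_0(y)|x-y|/|x-y|=N^2$ on the left, while the attraction produces $2ZN$ on the right; the factor $2$ comes from this averaging, not from the exchange term. This yields $N^2\le 2ZN+Cs^{-1}N+2N+Cs(D(\rho_0)+X(\gamma_0^{1/2}))$, and optimizing over $s$ together with the a priori energy bound $D(\rho_0)+X(\gamma_0^{1/2})\le C(Z^{7/3}+N)$ gives the result.
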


\begin{proof} For any partition of unity $\chi_1^2+\chi_2^2=1$, we have the binding inequality
\bq \label{eq:binding-inequality}
\cE^{\rm M}(\gamma_0) \le \cE^{\rm M}(\chi_1 \gamma_0 \chi_1) + \cE^{\rm M}_{Z=0}(\chi_2 \gamma_0 \chi_2).
\eq
We choose
$$
\chi_j (x) = g_j \Big( \frac{\nu \cdot x -\ell}{s}\Big)
$$ 
with $s>0,\ell>0, \nu\in \mathbb{S}^2$, and $g_j: \R \to \R^+$ satisfying 
$$
g_1^2+g_2^2=1, \quad g_1(t)=1 \text{~if~} t \le 0, \quad g_1(t)=0 \text{~if~} t \ge 1,\quad |\nabla g_1| + |\nabla g_2| \le C.
$$
By the IMS formula and the fact that
$$ X(\chi_j \gamma_0^{1/2} \chi_j) \le X((\chi_j \gamma_0 \chi_j)^{1/2}) $$
(see \cite[Lemma 3]{FraLieSieSei-07}), we can estimate
\begin{align*}
&\cE^{\rm M}(\chi_1 \gamma_0 \chi_1) + \cE^{\rm M}_{Z=0}(\chi_2 \gamma_0 \chi_2) - \cE^{\rm M}(\gamma_0) \\
&\le \int \Big( |\nabla \chi_1(x)|^2 + |\nabla \chi_2(x)|^2\Big) \rho_0(x) \d x + \int \frac{Z\chi_2^2(x) \rho_0(x)}{|x|} \d x \\
& \quad + \iint \frac{\chi_2^2(x) \Big( |\gamma_0^{1/2}(x,y)|^2 - \rho_0(x)\rho_0(y)\Big) \chi_1^2(y) }{|x-y|} \d x \d y \\
&\le Cs^{-2} \int_{\nu \cdot x -s \le \ell \le \nu \cdot x } \rho_0(x) \d x + \int_{\ell \le \nu \cdot x } \frac{Z \rho_0(x)}{|x|} \d x \\
& \quad +\iint_{\nu \cdot y - s \le \ell \le \nu \cdot x} \frac{  |\gamma_0^{1/2}(x,y)|^2}{|x-y|} \d x \d y- \iint_{\nu \cdot y \le \ell \le \nu \cdot x-s} \frac{ \rho_0(x) \rho_0(y)}{|x-y|} \d x \d y .
\end{align*}
Thus  from \eqref{eq:binding-inequality} it follows that for all $s>0,\ell>0$ and $\nu \in \mathbb{S}^2$,
\begin{align*} 
&\iint_{\nu \cdot y \le \ell \le \nu \cdot x-s} \frac{ \rho_0(x) \rho_0(y)}{|x-y|} \d x \d y \le  Cs^{-2} \int_{\nu \cdot x -s \le \ell \le \nu \cdot x } \rho_0(x) \d x  \nn\\
&\qquad + \int_{\ell \le \nu \cdot x } \frac{Z \rho_0(x)}{|x|} \d x  + \iint_{\nu \cdot y - s \le \ell \le \nu \cdot x} \frac{  |\gamma^{1/2}(x,y)|^2}{|x-y|} \d x \d y.
\end{align*}
Next, we integrate over $\ell \in (0,\infty)$, then average over $\nu \in \mathbb{S}^2$. We use Fubini's theorem and  
$$
\int_{\mathbb{S}^2} [\nu\cdot z]_+\,\frac{d\nu}{4\pi} =  \frac{|z|}{4},  \quad \forall z\in \mathbb{R}^3.
$$
Moreover, we also use 
$$
\int_0^\infty \1\big(b -s  \le \ell \le a\big)  \d \ell \le [a-b]_+ +s 
$$
(for the right side) and
$$
\int_0^\infty  \Big( \1\big(b \le \ell \le a - s\big) + \1\big(- a \le \ell \le - b - s\big) \Big) \d \ell \ge \Big[ [a-b]_+ -2s \Big]_+
$$
(for the left side) with $a=\nu \cdot  x$, $b=\nu \cdot y$  . All this leads to 
$$
N^2 \le Cs^{-1}N + 2Z N   + 2N + 2s ( D(\rho_0)+ X(\gamma_0^{1/2})).  
$$
Optimizing over $s>0$ and using the a-priori estimate 
$$D(\gamma_0)+X(\gamma_0^{1/2}) \le C(Z^{7/3}+N)$$
(which follows by an easy energy comparison; see \cite[Corollary 5]{FraNamBos-16b}), we get $ N \le 2Z + C(Z^{2/3}+1)$. 
\end{proof}

In order to improve the bound in Lemma \ref{lem:2Z}, we use the following observation. Heuristically, the electrons in the exterior region $|x|\ge r$ feel the rest of the system as an ``effective nucleus"  with the {\em screened nuclear charge}
\begin{equation}
\label{eq:screenedcharge}
Z_r = Z - \int_{|x|<r} \rho_0(x) \d x.
\end{equation}
Therefore, by modifying the proof of Lemma \ref{lem:2Z} we can control the number of exterior electrons in terms of $Z_r$. We still lose a factor 2, but this is not a big problem because $Z_r$ is much smaller than $Z$ (if $r$ is not too small).  

Throughout the paper, we will use the cut-off functions
\bq \label{eq:def-eta-r} \chi_r^+ (x)=\1(|x|\ge r), \quad \chi_r^+ \ge \eta_r \ge \chi_{(1+\lambda)r}^+, \quad \abs{\nabla \eta_r} \le C (\lambda r)^{-1}.
\eq
We have the following upgraded version of Lemma \ref{lem:2Z}.

\begin{lemma}[Exterior $L^1$-estimate] \label{lem:L1-bound} For all $r>0,s>0$ and $\lambda\in (0,1/2]$,
\begin{align*} 
 \int \chi^+_{r}\rho_0  &\le C  \int_{r<|x|<(1+\lambda)^2 r} \rho_0 + C \Big([Z_r]_+  + s + \lambda^{-2}s^{-1}+ \lambda^{-1}\Big)  \\
 &\quad +C  \Big( s^2 \Tr (-\Delta \eta_r \gamma_0 \eta_r) \Big)^{3/5} + C \Big( s^2 \Tr (-\Delta \eta_r \gamma_0 \eta_r) \Big)^{1/3}.
\end{align*} 
\end{lemma}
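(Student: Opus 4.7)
The plan is to adapt the proof of Lemma~\ref{lem:2Z} to the exterior region $|x|\ge r$ by using the cutoff $\eta_r$ to isolate the outer electrons, so that Newton's theorem can replace the nuclear charge $Z$ by the screened charge $Z_r$.

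First, I would apply the same binding-inequality and IMS calculation as in Lemma~\ref{lem:2Z}, with the same half-plane cutoffs $\chi_j(x)=g_j((\nu\cdot x-\ell)/s)$, obtaining the same pointwise inequality displayed there. The new ingredient is to integrate $\ell$ only over $((1+\lambda)r,\infty)$ (rather than $(0,\infty)$) before averaging over $\nu\in\mathbb{S}^2$. Since $\ell\ge(1+\lambda)r$ together with $\ell\le\nu\cdot x-s$ forces $|x|\ge(1+\lambda)r$, only exterior electrons contribute to the $x$-integrations, and after averaging the LHS controls $\bigl(\int\chi_r^+\rho_0\bigr)^2$ from below, modulo the explicit boundary buffer $\int_{r<|x|<(1+\lambda)^2 r}\rho_0$.

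Next, I would invoke Newton's theorem: for $|x|\ge r$,
\[
\frac{Z}{|x|}=\frac{Z_r}{|x|}+\int_{|y|<r}\frac{\rho_0(y)}{|x-y|}\,dy,
\]
so the nuclear attraction on the exterior is screened by the interior density to $Z_r/|x|$ (or to $[Z_r]_+/|x|$ when $Z_r\le 0$, using the radial symmetry of $\rho_0$ to restrict to a radial subset of the interior charge with total mass $Z$). Moving the resulting interior-exterior Coulomb term to the LHS, together with the part of the LHS in which $|y|<r$ and $|x|>(1+\lambda)r$, produces the factor $[Z_r]_+$ (rather than $Z$) in the final bound. The localization error $|\nabla\eta_r|^2\le C(\lambda r)^{-2}$, supported in the annulus $r<|x|<(1+\lambda)r$, contributes the $\lambda^{-1}$ term after integration, while the first RHS integral of Lemma~\ref{lem:2Z}, after integrating $\ell$ and averaging $\nu$, yields the $\lambda^{-2}s^{-1}$ contribution.

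Finally, the exchange integral and the exterior Coulomb direct interaction $D$ are bounded via Lieb--Thirring and Hardy--Littlewood--Sobolev applied to the localized density matrix $\tilde\gamma:=\eta_r\gamma_0\eta_r$: since $K=\Tr(-\Delta\tilde\gamma)$ dominates $\int\tilde\rho^{5/3}$, interpolation yields bounds for $D(\tilde\rho)+X(\tilde\gamma^{1/2})$ that, after dividing the resulting quadratic inequality by $\int\chi_r^+\rho_0$, produce the $(s^2K)^{3/5}$ and $(s^2K)^{1/3}$ terms. The main obstacle will be the Newton-screening step: carefully verifying that the interior-exterior Coulomb contribution already present inside the LHS exactly cancels $Z-[Z_r]_+$ worth of the nuclear term through the $\ell$-integration and $\nu$-averaging, so that only $[Z_r]_+$ survives, with the residual discrepancy between the cutoff radii $r$, $(1+\lambda)r$ and $(1+\lambda)^2 r$ absorbed by the explicit buffer $\int_{r<|x|<(1+\lambda)^2 r}\rho_0$.
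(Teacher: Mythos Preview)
Your strategy differs from the paper's in one key technical device. Rather than keeping the half-plane cutoffs $g_j((\nu\cdot x-\ell)/s)$ and restricting the $\ell$-integration to $((1+\lambda)r,\infty)$, the paper \emph{deforms the cutoffs} to
\[
\chi_j(x)=g_j\!\Big(\frac{\nu\cdot\theta(x)-\ell}{s}\Big),
\]
where $\theta:\R^3\to\R^3$ satisfies $\theta\equiv 0$ on $\{|x|\le r\}$, $\theta(x)=x$ on $\{|x|\ge(1+\lambda)r\}$, $|\theta(x)|\le|x|$, and $|\nabla\theta|\le C\lambda^{-1}$, and then integrates $\ell$ over all of $(0,\infty)$ exactly as in Lemma~\ref{lem:2Z}. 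This buys three things at once: (i) $\chi_1\equiv 1$ on $\{|x|\le r\}$, so the Newton screening $Z\to Z_r$ is exact \emph{pointwise in $\ell,\nu$} before any integration; (ii) the $\ell$- and $\nu$-averaging has the same geometry as in Lemma~\ref{lem:2Z}, with $\theta(x),\theta(y)$ replacing $x,y$, so the weight on the LHS collapses to $[|\theta(x)-\theta(y)|-2s]_+/4$; (iii) the IMS cost acquires the factor $|\nabla\theta|^2\le C\lambda^{-2}$, which is precisely the origin of the $\lambda^{-2}s^{-1}$ and $\lambda^{-1}$ terms.

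Your plan recovers (i): with the unmodified cutoffs and $\ell>(1+\lambda)r$ one still has $\chi_1^2(y)=1$ whenever $|y|<r$, so the Newton step goes through. But your accounting for (ii) and (iii) does not match the claimed bound. With the unmodified cutoffs the IMS cost is only $Cs^{-2}$, not $C\lambda^{-2}s^{-2}$; attributing the $\lambda$-factors to $|\nabla\eta_r|^2$ is a confusion, since $\eta_r$ never enters the binding inequality---it is used only afterwards to package the exterior kinetic energy. More seriously, restricting the $\ell$-range inserts a boundary term $\max(\nu\cdot y,(1+\lambda)r)$ into the $\ell$-integral, which does not average over $\nu$ to the clean weight $[|x-y|-2s]_+/4$; you would need a separate argument to extract a lower bound of the form $c\big(\int\chi_r^+\rho_0\big)^2$ from the resulting expression, and you have not indicated how. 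The $\theta$-construction sidesteps all of this simultaneously.
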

\begin{proof} We use the binding inequality \eqref{eq:binding-inequality} with 
$$\chi_j (x) = g_j \Big( \frac{\nu \cdot \theta(x) -\ell}{s}\Big)$$
where $\theta: \R^3\to \R^3$ satisfies 
$$
|\theta (x)| \le |x|, \quad \theta(x) =0 \text{~if~} |x| \le r, \quad \theta(x)=x \text{~if~} |x| \ge (1+\lambda) r, \quad |\nabla \theta|\le C \lambda^{-1}
$$
and proceed similarly as in Lemma \ref{lem:2Z}. See \cite[Lemma 7]{FraNamBos-16b} for details. 
\end{proof}


\section{Comparison with Thomas--Fermi theory}

In this section, we control the electron density in the exterior region $\{|x|\ge r\}$ in M\"uller theory by comparison with Thomas--Fermi (TF) theory. Recall that in usual TF theory, the ground state energy is obtained by minimizing the density functional 
$$
\cE^{\rm TF}(\rho)= c^{\rm TF}\int_{\R^3} \rho^{5/3}(x) \d x - \int_{\R^3} \frac{Z \rho(x)}{|x|} \d x + D(\rho), \quad c^{\rm TF}= \frac{3}{5}(6\pi^2)^{2/3},
$$
over all $0\le \rho \in L^1(\R^3)\cap L^{5/3}(\R^3)$. The TF minimizer $\rho^{\rm TF}$ is unique and has total mass $\int \rho^{\rm TF}=Z$ \cite{LieSim-77b}. Here, as in \cite{Solovej-91}, we will consider TF theory restricted to the exterior region $\{|x|\ge r\}$. 

\begin{lemma}[Exterior TF theory] \label{lem:TF} Let $r>0$ and $z \in \mathbb{R}$. Then the TF functional
$$
\cE_r^{\rm TF}(\rho)= c^{\rm TF} \int_{\R^3} \rho(x)^{5/3}\d x - \int_{\R^3} \frac{z \rho(x)}{|x|} \d x +D(\rho)$$
has a unique minimizer $\rho_r^{\rm TF}$ among all densities satisfying
$$0\le \rho \in L^{5/3}(\R^3)\cap L^1(\R^3), \quad \supp \rho \subset \{|x| \ge r\}.$$
The minimizer $\rho_r^{\rm TF}$ is radially symmetric, has total mass $\int \rho_r^{\rm TF}=[z]_+$, has bounded kinetic energy
\begin{equation}
\label{eq:rhotfrkin}
\int \left(\rho^{\rm TF}_r\right)^{5/3} \leq C [z]_+^{7/3}
\end{equation}
and satisfies the TF equation
\bq \label{eq:TFequation}
\frac{5c^{\rm TF}}{3}(\rho_r^{\rm TF})^{2/3} = [\varphi_r^{\rm TF}]_+
\qquad\text{in}\ \{|x|>r\}
\eq
with
$$
\varphi_r^{\rm TF}(x) = \frac{z \chi_r^+(x)}{|x|} - \rho_r^{\rm TF}*|x|^{-1}.
$$
Moreover, for every fixed $\kappa>0$, there is an $\alpha(\kappa)>0$ such that if $z r^3 \ge \kappa$ and $|x|r^{-1}\ge \alpha(\kappa)$, then we have the Sommerfeld estimate
\bq \label{eq:Sommerfeld}
\left| \rho_r^{\rm TF}(x) - \big(5\pi^{-1}c^{\rm TF}\big)^3 |x|^{-6} \right| \le  C |x|^{-6}(|x|r^{-1})^{-\zeta}
\eq
with $\zeta=(\sqrt{73}-7)/2\approx 0.77.$ For the full $\rho^{\rm TF}$, for all $x\ne 0$ we have
\bq \label{eq:Sommerfeld-r0}
0 \ge  \rho^{\rm TF}(x)  - \big(5\pi^{-1}c^{\rm TF}\big)^3 |x|^{-6} \ge - C |x|^{-6} (|x|z^{1/3})^{-\zeta}.
\eq
\end{lemma}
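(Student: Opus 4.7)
The plan is to treat the seven assertions in order, with the Sommerfeld estimates as the main obstacle. Existence of a minimiser on the closed convex admissible set $\{0\le\rho\in L^1\cap L^{5/3},\ \supp\rho\subset\{|x|\ge r\}\}$ follows from the direct method: the $\rho^{5/3}$ term is coercive (the nuclear attraction $z\int\rho|x|^{-1}\,\d x$ is controlled via $|x|^{-1}\in L^{3,\infty}(\R^3)$ and H\"older, and $D(\rho)\ge 0$ is a bonus), while $\int\rho^{5/3}$ and $D$ are both convex and weakly lower semicontinuous. Strict convexity of $\rho\mapsto\int\rho^{5/3}$ gives uniqueness of $\rho_r^{\rm TF}$; rotational invariance of the constraint and the functional then forces $\rho_r^{\rm TF}$ to be radial. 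Since no mass is prescribed, the Euler--Lagrange equation comes with chemical potential $\mu=0$, which is precisely \eqref{eq:TFequation}.

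For the total mass, if $z\le 0$ then $\varphi_r^{\rm TF}\le 0$ everywhere and \eqref{eq:TFequation} forces $\rho_r^{\rm TF}\equiv 0$, consistent with $[z]_+=0$. For $z>0$, neutrality $\int\rho_r^{\rm TF}=z$ follows from Newton's theorem: radially, $\varphi_r^{\rm TF}(x)\sim(z-\int\rho_r^{\rm TF})/|x|$ as $|x|\to\infty$; a strict deficit would leave $\varphi_r^{\rm TF}>0$ far out and allow one to lower the energy by adding mass there, contradicting minimality, while a strict excess would produce $\varphi_r^{\rm TF}<0$ and hence $\rho_r^{\rm TF}=0$ for large $|x|$, contradicting $\int\rho_r^{\rm TF}>z$. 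Multiplying \eqref{eq:TFequation} by $\rho_r^{\rm TF}$ and integrating gives
\begin{equation*}
\tfrac{5c^{\rm TF}}{3}\int(\rho_r^{\rm TF})^{5/3}
= z\!\int\!\tfrac{\rho_r^{\rm TF}(x)}{|x|}\,\d x-2D(\rho_r^{\rm TF}),
\end{equation*}
and a standard H\"older/Young estimate on $z\int\rho|x|^{-1}\,\d x$ absorbs a fraction of the $L^{5/3}$ term on the left and leaves a residue of order $z^{7/3}$, which is \eqref{eq:rhotfrkin}.

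The Sommerfeld estimate \eqref{eq:Sommerfeld} is the technical core. From the identity $-\Delta\varphi_r^{\rm TF}=-4\pi\rho_r^{\rm TF}$ off $\{0\}\cup\{|x|=r\}$ together with radiality, \eqref{eq:TFequation} becomes $\varphi''(s)+(2/s)\varphi'(s)=(2/(3\pi))\varphi(s)^{3/2}$ on $\{s>r,\ \varphi(s)>0\}$. The Emden substitution $t=\log s$, $u(t)=s^4\varphi(s)$, converts this into the autonomous ODE
\begin{equation*}
\ddot u-7\dot u+12 u=\tfrac{2}{3\pi}u^{3/2},
\end{equation*}
whose positive equilibrium is $u=A:=324\pi^2$, corresponding to the exact Sommerfeld solution $\varphi_\infty(s)=As^{-4}$ and density $(5\pi^{-1}c^{\rm TF})^3 s^{-6}$. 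Linearising at $u=A$ produces the characteristic polynomial $\lambda^2-7\lambda-6=0$, whose negative root is exactly $-\zeta=-(\sqrt{73}-7)/2$. Under the hypothesis $zr^3\ge\kappa$ the orbit $u(t)$ is trapped in a compact set near $A$, and a standard sub-/super-solution construction (barriers of the form $A(1\pm C(s/r)^{-\zeta})$) sandwiches $u$ between them for $s/r\ge\alpha(\kappa)$; transferring through the relation $\rho_r^{\rm TF}=(3/(5c^{\rm TF}))^{3/2}\varphi^{3/2}$ delivers \eqref{eq:Sommerfeld}.

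For the full profile \eqref{eq:Sommerfeld-r0}, the Sommerfeld potential $\varphi_\infty$ acts as a global barrier for $\varphi^{\rm TF}$: $w:=\varphi^{\rm TF}-\varphi_\infty$ satisfies $\Delta w=(2/(3\pi))((w+\varphi_\infty)^{3/2}-\varphi_\infty^{3/2})$ on $\{\varphi^{\rm TF}>0\}$, tends to $0$ at infinity, and is negative near the origin (since $\varphi_\infty\sim A|x|^{-4}$ dominates $\varphi^{\rm TF}\sim z|x|^{-1}$), so the maximum principle rules out a positive interior maximum and forces $\varphi^{\rm TF}\le\varphi_\infty$, i.e., $\rho^{\rm TF}\le(5\pi^{-1}c^{\rm TF})^3|x|^{-6}$. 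For the matching lower bound on the gap, the TF scaling $\tilde\varphi(y)=z^{-4/3}\varphi^{\rm TF}(z^{-1/3}y)$ eliminates $z$ and reduces the claim to the universal asymptotic behaviour as $|y|\to\infty$; the same Emden ODE analysis as above, now applied without support constraint and starting from the natural TF length scale $z^{-1/3}$, produces the rate $(|x|z^{1/3})^{-\zeta}$ claimed in \eqref{eq:Sommerfeld-r0}.
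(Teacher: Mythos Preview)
The paper does not prove this lemma: its proof consists entirely of citations to \cite{Solovej-91} (Appendix~B) and \cite{Solovej-03} (Lemma~4.4 and Theorems~5.2, 5.4). Your sketch is a faithful outline of exactly the classical argument those references carry out---direct method plus strict convexity for existence and uniqueness, Euler--Lagrange with zero chemical potential, and the Emden--Fowler ODE reduction with sub/super-solution barriers for the Sommerfeld asymptotics---so in spirit you and the paper agree, and your ODE computations (the autonomous equation for $u$, the equilibrium $A=324\pi^2$, and the characteristic roots $(7\pm\sqrt{73})/2$) are correct.

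Two spots deserve tightening. In the neutrality argument for $z>0$, the ``strict excess'' branch is incomplete: $\rho_r^{\rm TF}=0$ for large $|x|$ does not by itself contradict $\int\rho_r^{\rm TF}>z$. You still need one more step, for instance setting $R=\sup\{s:\rho_r^{\rm TF}(s)>0\}$ and using Newton's theorem to see that $\varphi_r^{\rm TF}(R^-)=(z-\int\rho_r^{\rm TF})/R<0$, which clashes with $\varphi_r^{\rm TF}>0$ wherever $\rho_r^{\rm TF}>0$. In the Sommerfeld part, the assertion that the orbit of $u$ is ``trapped in a compact set near $A$'' once $zr^3\ge\kappa$ is not automatic---the linearisation at $A$ has the unstable eigenvalue $(7+\sqrt{73})/2$, so generic orbits diverge---and is precisely the conclusion the barrier construction must deliver; moreover the barriers have to be sub/super-solutions of the full nonlinear equation, not merely of its linearisation, which in practice forces a slightly more elaborate ansatz than $A(1\pm C(s/r)^{-\zeta})$. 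Both points are routine to complete along the lines of Solovej's papers.
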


\begin{proof} See \cite[Appendix B]{Solovej-91}. In fact, \eqref{eq:Sommerfeld} is slightly stronger than \cite[Theorem B3]{Solovej-91} and it is taken from \cite[Lemma 4.4]{Solovej-03}. The bound \eqref{eq:Sommerfeld-r0} is taken from \cite[Theorems 5.2, 5.4]{Solovej-03}. 
\end{proof}

\emph{Convention.} In what follows, $\rho_r^{\rm TF}$ denotes the minimizer from Lemma \ref{lem:TF} with the choice $z=Z_r$ from \eqref{eq:screenedcharge}.

\medskip

The main result in this section is the following

\begin{lemma}[Comparison with TF] \label{lem:D-M-TF} For all $r\ge s > 0$ and $\lambda\in (0,1/2]$, 
\begin{align} \label{eq:D-D-D}
D(\eta_r^2 \rho_0- \rho_r^{\rm TF}) &\le  C s^{-2} \int \chi_r^+   \rho_0 + C [Z_r]_+^{12/5} r^{-1/5} s^{2/5} 
  + \cR \end{align}

where
\begin{align*}
\cR &= C \big(1+ (\lambda r)^{-2}\big) \int_{(1-\lambda)r \le |x| \le (1+\lambda)r} \rho_0 +  C \lambda r^{1/2} [Z_{(1-\lambda)r}]_+^{5/2} \nn\\
&\quad + C \Big( \Tr(-\Delta \eta_r \gamma_0 \eta_r) \Big)^{1/2} \Big( \int \chi_r^+ \rho_0 \Big)^{1/2}.
\end{align*}
\end{lemma}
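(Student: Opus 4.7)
My plan is to adapt Solovej's comparison argument from reduced Hartree--Fock theory \cite{Solovej-91}: convert the Coulomb distance $D(\eta_r^2\rho_0-\rho_r^{\rm TF})$ into a TF energy gap by exploiting convexity of the exterior TF functional, and then control that gap using the M\"uller minimization.

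The first step is a deterministic ``convexity gap''. For any $\rho \ge 0$ supported in $\{|x|\ge r\}$, I would expand $D(\rho)=D(\rho_r^{\rm TF})+\int(\rho-\rho_r^{\rm TF})(\rho_r^{\rm TF}*|x|^{-1})+D(\rho-\rho_r^{\rm TF})$, substitute the TF equation \eqref{eq:TFequation} on $\supp \rho_r^{\rm TF}$ together with $\varphi_r^{\rm TF}\le 0$ on its complement (in $\{|x|\ge r\}$), and invoke the pointwise convexity of $t\mapsto t^{5/3}$ to obtain
\begin{equation*}
D(\rho-\rho_r^{\rm TF}) \le \cE_r^{\rm TF}(\rho) - \cE_r^{\rm TF}(\rho_r^{\rm TF}).
\end{equation*}
Taking $\rho=\eta_r^2\rho_0$ (which is supported in $\{|x|\ge r\}$ by the definition of $\eta_r$), the lemma reduces to an upper bound on $\cE_r^{\rm TF}(\eta_r^2\rho_0)-\cE_r^{\rm TF}(\rho_r^{\rm TF})$ by the RHS of the lemma.

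The second step exploits M\"uller minimality in the exterior. I would decompose $\cE^{\rm M}(\gamma_0)$ via the IMS identity for the kinetic term with $\eta_r^2+(1-\eta_r^2)=1$, combined with the operator inequality $X(\eta_r\gamma^{1/2}\eta_r)\le X((\eta_r\gamma\eta_r)^{1/2})$ from \cite[Lemma 3]{FraLieSieSei-07} (as in the proof of Lemma \ref{lem:2Z}); by Newton's theorem, the interior density converts the bare nuclear attraction seen from the exterior into an effective attraction from the screened charge $Z_r$. Lieb-Thirring applied to $\eta_r\gamma_0\eta_r$, combined with a secondary localization at scale $s$, then gives a lower bound on $\Tr(-\Delta\eta_r\gamma_0\eta_r)$ in terms of $c^{\rm TF}\int(\eta_r^2\rho_0)^{5/3}$ modulo $Cs^{-2}\int\chi_r^+\rho_0$, while the mismatch between the Lieb-Thirring constant and $c^{\rm TF}$ is absorbed by H\"older interpolation against the a priori TF kinetic bound \eqref{eq:rhotfrkin}, producing the $C[Z_r]_+^{12/5}r^{-1/5}s^{2/5}$ error. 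Comparing with a test state obtained by replacing the exterior of $\gamma_0$ by a density matrix with density $\rho_r^{\rm TF}$ (the small mass mismatch being absorbed via Newton's theorem and Lemma \ref{lem:L1-bound}) closes the comparison. The transition layer of width $\lambda r$ produces the $C(1+(\lambda r)^{-2})\int_{(1-\lambda)r\le|x|\le(1+\lambda)r}\rho_0$ piece of $\cR$; the Sommerfeld estimates \eqref{eq:Sommerfeld}--\eqref{eq:Sommerfeld-r0} applied to $\rho^{\rm TF}$ in that layer yield the $C\lambda r^{1/2}[Z_{(1-\lambda)r}]_+^{5/2}$ piece; and Cauchy-Schwarz applied to the cross-term between the exterior density and the interior part of $\gamma_0^{1/2}$ gives the remaining $C(\Tr(-\Delta\eta_r\gamma_0\eta_r))^{1/2}(\int\chi_r^+\rho_0)^{1/2}$ contribution.

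The main obstacle will be handling the M\"uller exchange $X(\gamma^{1/2})$, a genuinely nonlocal quartic functional that can in fact exceed $D(\rho_\gamma)$ (for instance when $\gamma$ has eigenvalues strictly between $0$ and $1$) and therefore has no simple TF analog. A sharp comparison relies crucially on the convexity of the M\"uller functional, which permits working directly with the density matrix $\eta_r\gamma_0\eta_r$ together with the sharp exchange inequality from \cite[Lemma 3]{FraLieSieSei-07}---this is precisely where the argument shortens compared to the multiscale approach of \cite{FraNamBos-16b}.
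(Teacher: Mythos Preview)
Your convexity-gap reduction in the first step is correct and is essentially equivalent to the algebraic identity the paper uses in its Step~3 (rewriting $\cE_r^{\rm RHF}(\eta_r\gamma_0\eta_r)$ in terms of $D(\eta_r^2\rho_0-\rho_r^{\rm TF})$ and the TF potential). The genuine gap is in your second step. Having reduced to $\cE_r^{\rm TF}(\eta_r^2\rho_0)-\cE_r^{\rm TF}(\rho_r^{\rm TF})$, you are forced to prove
\[
c^{\rm TF}\int(\eta_r^2\rho_0)^{5/3} \;\le\; \Tr(-\Delta\,\eta_r\gamma_0\eta_r) + Cs^{-2}\int\chi_r^+\rho_0 + C[Z_r]_+^{12/5}r^{-1/5}s^{2/5},
\]
i.e.\ a kinetic Lieb--Thirring inequality with the \emph{sharp semiclassical constant} $c^{\rm TF}$ up to errors that do not involve $\int(\eta_r^2\rho_0)^{5/3}$. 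No such bound is available: the known Lieb--Thirring constant is strictly smaller than $c^{\rm TF}$, and the resulting deficit $(\text{const})\int(\eta_r^2\rho_0)^{5/3}$ is of leading order. It cannot be ``absorbed by H\"older interpolation against \eqref{eq:rhotfrkin}'', since \eqref{eq:rhotfrkin} controls $\int(\rho_r^{\rm TF})^{5/3}$, not $\int(\eta_r^2\rho_0)^{5/3}$. Coherent-state (``secondary localization at scale $s$'') methods do recover the semiclassical constant, but only for $\Tr((-\Delta-V)\gamma)$ against a \emph{fixed} potential $V$, not for the kinetic energy alone in terms of $\rho_\gamma^{5/3}$. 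This is exactly how the paper proceeds: it never bounds $c^{\rm TF}\int(\eta_r^2\rho_0)^{5/3}$ directly, but applies the semiclassical lower bound of Lemma~\ref{lem:semi}(i) with $V=\varphi_r^{\rm TF}$ to $\Tr((-\Delta-\varphi_r^{\rm TF})\eta_r\gamma_0\eta_r)$, after which $-L_{\rm sc}\int[\varphi_r^{\rm TF}]_+^{5/2}-D(\rho_r^{\rm TF})$ collapses exactly to $\cE_r^{\rm TF}(\rho_r^{\rm TF})$ via the TF equation. The error $C[Z_r]_+^{12/5}r^{-1/5}s^{2/5}$ then comes from the mollification term $\int[\varphi_r^{\rm TF}-\varphi_r^{\rm TF}*g_s^2]_+^{5/2}$, not from any constant mismatch.

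Two secondary inaccuracies: the term $C\lambda r^{1/2}[Z_{(1-\lambda)r}]_+^{5/2}$ in $\cR$ is not produced by the Sommerfeld estimates \eqref{eq:Sommerfeld}--\eqref{eq:Sommerfeld-r0} (which play no role in this lemma), but by the trial-state argument establishing \eqref{eq:rHF}; and Lemma~\ref{lem:L1-bound} is not used here.
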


To prove this lemma we will use the following semi-classical estimates from \cite[Lemma 8.2]{Solovej-03}.

\begin{lemma}[Semi-classical analysis]\label{lem:semi} Let $L_{\rm sc}= (15 \pi^2)^{-1}$.
For every $s>0$, fix a smooth function $g_s:\R^3\to [0,\infty)$ such that
$$
\supp g_s \subset \{|x| \le s\}, \quad \int g_s^2=1,\quad \int |\nabla g_s|^2 \le Cs^{-2}. 
$$
(i) For all $V: \R^3\to \R$ such that $[V]_+, [V-V*g_s^2]_+ \in L^{5/2}$ and for all density matrices $0\le \gamma \le 1$, we have
\begin{align} \label{eq:semi-lower}
\Tr( (-\Delta-V)\gamma) &\ge - L_{\rm sc} \int [V]_+^{5/2} - C s^{-2} \Tr \gamma \nn\\
&\qquad - C \left( \int [V]_+^{5/2}\right)^{3/5}\left( \int [V-V*g_s^2]_+^{5/2}\right)^{2/5}.
\end{align}
(ii) If $V_+\in L^{5/2}(\R^3)\cap L^{3/2}(\R^3)$, then there is a density matrix $\gamma$ such that
$$
\rho_\gamma= \frac{5}{2}L_{\rm sc} [V]_+^{3/2}*g_s^2
$$
and
\begin{align} \label{eq:semi-upper}
\Tr( -\Delta\gamma) \leq \frac{3}{2}L_{\rm sc}\int [V]_+^{5/2} + C  s^{-2} \int [V]_+^{3/2}.
\end{align}
\end{lemma}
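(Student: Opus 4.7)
Following Solovej's reduced-Hartree--Fock strategy \cite{Solovej-91}, exploited here via the convexity of $\cE^{\rm M}$, the plan is to identify $D(\eta_r^2\rho_0-\rho_r^{\rm TF})$ as the gap between a semiclassical lower bound on the exterior energy of $\gamma_0$ (via Lemma \ref{lem:semi}(i)) and a semiclassical upper bound on the energy of a TF-adapted trial state (via Lemma \ref{lem:semi}(ii)), the bridge being the TF equation \eqref{eq:TFequation}.

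Step 1 (localization and trial state). Apply the binding inequality \eqref{eq:binding-inequality} with a smooth partition $\chi_1^2+\chi_2^2=1$ that separates $\{|x|\le (1-\lambda)r\}$ from $\{|x|\ge r\}$, so that $\chi_2$ coincides, up to reparametrization, with $\eta_r$. Using the exchange monotonicity $X(\chi_2\gamma_0^{1/2}\chi_2)\le X((\chi_2\gamma_0\chi_2)^{1/2})$ from \cite[Lemma 3]{FraLieSieSei-07} and the IMS computation from Lemma \ref{lem:2Z}, one obtains
\begin{equation*}
\cE^{\rm M}(\chi_1\gamma_0\chi_1)+\cE^{\rm M}_{Z=0}(\chi_2\gamma_0\chi_2)-\cE^{\rm M}(\gamma_0)\le C(1+(\lambda r)^{-2})\int_{\text{annulus}}\rho_0,
\end{equation*}
where Newton's theorem and the splitting of the Coulomb terms naturally promotes the nuclear charge seen by the exterior piece to $Z_{(1-\lambda)r}$. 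Construct a trial state $\tilde\gamma$ from Lemma \ref{lem:semi}(ii) with potential $V=\varphi_{(1-\lambda)r}^{\rm TF}$, so that $\rho_{\tilde\gamma}=\rho_{(1-\lambda)r}^{\rm TF}*g_s^2$ is supported in $\{|x|\ge (1-\lambda)r\}$; then minimality of $\gamma_0$ tested against $\chi_1\gamma_0\chi_1+\tilde\gamma$ yields a comparison of $\cE^{\rm M}_{Z=0}(\chi_2\gamma_0\chi_2)$ with $\cE^{\rm M}_{Z=0}(\tilde\gamma)$ plus the same boundary errors.

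Step 2 (semiclassical bounds and Coulomb cancellation). Apply Lemma \ref{lem:semi}(i) with the same $V$ to lower-bound $\Tr(-\Delta\chi_2\gamma_0\chi_2)-\int V\chi_2^2\rho_0$, and pair it with the upper bound from Lemma \ref{lem:semi}(ii) for $\Tr(-\Delta\tilde\gamma)$. The nuclear-attraction and Coulomb terms combine via the convexity identity
\begin{equation*}
-\int V\rho+D(\rho)=-\int V\rho_r^{\rm TF}+D(\rho_r^{\rm TF})+D(\rho-\rho_r^{\rm TF}),
\end{equation*}
available once $V$ is recognized through \eqref{eq:TFequation} as $\frac{Z_r\chi_r^+}{|x|}-\rho_r^{\rm TF}*|x|^{-1}$. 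After the $L^{5/3}$ and $-\int V\rho_r^{\rm TF}$ pieces cancel against each other via the TF equation, what remains is $+D(\eta_r^2\rho_0-\rho_r^{\rm TF})$ bounded above by: the term $Cs^{-2}\int\chi_r^+\rho_0$ coming from the $-Cs^{-2}\Tr\gamma$ correction in \eqref{eq:semi-lower}; the term $C[Z_r]_+^{12/5}r^{-1/5}s^{2/5}$ arising from the convolution error $\|[V]_+^{5/2}-[V*g_s^2]_+^{5/2}\|_{L^1}$, estimated via Hardy--Littlewood--Sobolev together with \eqref{eq:rhotfrkin}; the piece $\lambda r^{1/2}[Z_{(1-\lambda)r}]_+^{5/2}$ measuring $D(\rho_r^{\rm TF}-\rho_{(1-\lambda)r}^{\rm TF})$ via \eqref{eq:Sommerfeld}; and the cross term $C(\Tr(-\Delta\eta_r\gamma_0\eta_r))^{1/2}(\int\chi_r^+\rho_0)^{1/2}$ obtained by bounding the leftover $X(\eta_r\gamma_0^{1/2}\eta_r)\le C\Tr((-\Delta)^{1/2}\eta_r\gamma_0\eta_r)$ through Cauchy--Schwarz.

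The principal difficulty is orchestrating the algebra so that the quadratic Coulomb contributions collapse precisely to $+D(\eta_r^2\rho_0-\rho_r^{\rm TF})$: this requires the kinetic-energy coefficients $\tfrac{3}{2}L_{\rm sc}$ and $\tfrac{5}{2}L_{\rm sc}$ of Lemma \ref{lem:semi} to combine with $c^{\rm TF}=\tfrac{3}{5}(6\pi^2)^{2/3}$ exactly as prescribed by the TF equation, and convexity then guarantees that the quadratic-in-$(\rho-\rho_r^{\rm TF})$ remainder carries the correct positive sign. A secondary obstacle is the radius mismatch between $r$ (which localizes $\eta_r$) and $(1-\lambda)r$ (which indexes the TF potential used in the semiclassical bounds); this mismatch is what forces $Z_{(1-\lambda)r}$ rather than $Z_r$ into $\cR$. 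The exchange term $X(\gamma_0^{1/2})$, the chief source of trouble in \cite{FraNamBos-16b}, is disposed of cleanly here by the exchange monotonicity of \cite{FraLieSieSei-07}, which is the key simplification afforded by M\"uller convexity.
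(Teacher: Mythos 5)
Your proposal does not address the statement it is supposed to prove. Lemma \ref{lem:semi} is a self-contained semiclassical assertion about an arbitrary potential $V$ and arbitrary density matrices $0\le\gamma\le1$: the lower bound \eqref{eq:semi-lower} with the sharp constant $L_{\rm sc}=(15\pi^2)^{-1}$, and the construction in (ii) of a trial density matrix whose density is exactly $\tfrac52 L_{\rm sc}[V]_+^{3/2}*g_s^2$ with kinetic energy \eqref{eq:semi-upper}. It makes no reference to the M\"uller minimizer, to $Z_r$, to the binding inequality, or to TF theory. What you have written is instead an outline of how the lemma is \emph{applied}, i.e.\ essentially the proof of Lemma \ref{lem:D-M-TF} (exterior RHF minimality, TF trial state, cancellation via \eqref{eq:TFequation}, the error terms in $\cR$), and in doing so you invoke ``Lemma \ref{lem:semi}(i)'' and ``Lemma \ref{lem:semi}(ii)'' as black boxes. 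With respect to the assigned statement this is circular: nothing in your text establishes either \eqref{eq:semi-lower} or \eqref{eq:semi-upper}. (For the record, the paper itself does not prove Lemma \ref{lem:semi} either; it quotes it from \cite[Lemma 8.2]{Solovej-03}.)

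A genuine proof goes through coherent states $f_{u,p}(x)=g_s(x-u)e^{ip\cdot x}$, which satisfy the resolution of the identity $(2\pi)^{-3}\iint |f_{u,p}\rangle\langle f_{u,p}|\,\d u\,\d p=1$. For (ii) one sets $\gamma=(2\pi)^{-3}\iint_{\{p^2\le [V(u)]_+\}}|f_{u,p}\rangle\langle f_{u,p}|\,\d u\,\d p$, so that $0\le\gamma\le1$, $\rho_\gamma=(6\pi^2)^{-1}[V]_+^{3/2}*g_s^2=\tfrac52 L_{\rm sc}[V]_+^{3/2}*g_s^2$, and $\Tr(-\Delta\gamma)=(10\pi^2)^{-1}\int[V]_+^{5/2}+\|\nabla g_s\|_2^2\,(6\pi^2)^{-1}\int[V]_+^{3/2}$, which is \eqref{eq:semi-upper} since $\|\nabla g_s\|_2^2\le Cs^{-2}$ and $\tfrac32 L_{\rm sc}=(10\pi^2)^{-1}$. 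For (i) one uses the upper-symbol identity $(2\pi)^{-3}\iint\big(p^2-V(u)\big)|f_{u,p}\rangle\langle f_{u,p}|\,\d u\,\d p=-\Delta+\|\nabla g_s\|_2^2-V*g_s^2$, which yields $\Tr\big(((1-\eps)(-\Delta)-V*g_s^2)\gamma\big)\ge-(1-\eps)^{-3/2}L_{\rm sc}\int[V]_+^{5/2}-Cs^{-2}\Tr\gamma$ for any $\eps\in(0,1)$, while the remaining piece $\eps(-\Delta)-(V-V*g_s^2)\ge \eps(-\Delta)-[V-V*g_s^2]_+$ is controlled by the Lieb--Thirring inequality by $-C\eps^{-3/2}\int[V-V*g_s^2]_+^{5/2}$; optimizing over $\eps$ produces exactly the cross term $C\big(\int[V]_+^{5/2}\big)^{3/5}\big(\int[V-V*g_s^2]_+^{5/2}\big)^{2/5}$ in \eqref{eq:semi-lower}. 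None of these steps, nor any substitute for them, appears in your proposal, so the statement remains unproved.
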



\begin{proof}[Proof of Lemma \ref{lem:D-M-TF}]
\noindent {\bf Step 1.} 
First, we show that the exterior density matrix $\eta_r \gamma_0 \eta_r$ essentially minimizes the exterior reduced Hartree-Fock functional 
$$\cE^{\rm RHF}_r(\gamma)= \Tr(-\Delta \gamma) -\int_{\R^3} \frac{Z_r \rho_\gamma (x)}{|x|}  \d x + D(\rho_\gamma) \,,
$$
where $Z_r$ is given by \eqref{eq:screenedcharge}. Indeed, for all $r>0$, $\lambda\in (0,1/2]$ and for all density matrices 
$$ 0\le \gamma \le 1, \quad {\rm supp}(\rho_\gamma) \subset \{|x| \ge r\}, \quad \Tr \gamma \le \int \chi_r^+\rho_0, $$
we have 
\begin{align} \label{eq:rHF}
\cE^{\rm RHF}_r(\eta_r \gamma_0 \eta_r)  \le \cE^{\rm RHF}_r(\gamma)+ \cR
\end{align}
The proof of \eqref{eq:rHF} is straightforward, using a trial state argument. We refer to  \cite[Lemma 9]{FraNamBos-16b} for details.

\medskip

\noindent {\bf Step 2.} Now we bound the right side of \eqref{eq:rHF} by choosing $\gamma$ as in Lemma~\ref{lem:semi} (ii) with $V = \chi_{r+s}^+\varphi_r^{\rm TF} \ge 0$. Note that $\rho_{\gamma} = \left(\chi_{r+s}^+ \rho_r^{\rm TF}\right)* g_s^2$ by the TF equation, and hence 
$$
\supp \rho_{\gamma}\subset \{|x|\ge r\},\quad \Tr\gamma = \int \rho_{\gamma} = \int \chi_{r+s}^+ \rho_r^{\rm TF} \leq \int \rho_r^{\rm TF} =Z_r \leq \int \chi_r^+\rho_0.
$$
The last inequality here comes from our assumption $N =\int \rho_0 \geq Z$. On the other hand, by the semi-classical estimate \eqref{eq:semi-upper},
\begin{align} \label{eq:rHF-upper}
\cE_r^{\rm RHF} ( \gamma) & \le 
        \frac{3}{2}L_{\rm sc}\int [\varphi_r^{\rm TF}]^{5/2}  - \int \frac{Z_r}{|x|} \rho_\gamma + D(\rho_\gamma) + C  s^{-2} \int [\varphi_r^{\rm TF}]^{3/2}_+ \nn\\
              &\le \cE^{\rm TF}_r(\rho_r^{\rm TF})   + \int_{r \le \abs{x} \le r+s}  \frac{Z_r}{|x|}\rho_r^{\rm TF} + C  s^{-2} \int \rho_r^{\rm TF}.
\end{align}
Here we have used the TF equation, the convexity $
D(\rho_\gamma)\le D(\rho_r^{\rm TF}* g_s^2) \le D(\rho_r^{\rm TF})$ and Newton's theorem
$$
\int |x|^{-1}\rho_\gamma  = \int   (|x|^{-1}*g_s^2) (\chi_{r+s}^+\rho_r^{\rm TF}) = \int |x|^{-1} (\chi_{r+s}^+\rho_r^{\rm TF}).
$$
Finally, we bound the error term by using \eqref{eq:rhotfrkin} and H\"older's inequality:
\begin{align} \label{eq:rho/x-by-kinetic}
\int_{r\le |x|\le r+s} \frac{Z_r}{|x|}\rho_r^{\rm TF} &\leq C \frac{[Z_r]_+}{r} \Big( \int_{r\le |x|\le r+s} (\rho_r^{\rm TF}(x) )^{5/3} \d x \Big)^{3/5} \Big(\int_{r\le |x|\le r+s} \d x \Big)^{2/5} \nn \\
& \le C \frac{[Z_r]_+}{r} \Big( [Z_r]_+^{7/3} \Big)^{3/5} \Big( r^2 s \Big)^{2/5} = C [Z_r]_+^{12/5} r^{-1/5} s^{2/5}.
\end{align}

\noindent {\bf Step 3.} To bound the left side of \eqref{eq:rHF}, we write
\begin{align*} 
 \cE_r^{\rm RHF} (\eta_r \gamma_0 \eta_r) 
    &= \Tr ((-   \Delta - \varphi_r^{\rm TF})\eta_r \gamma_0 \eta_r)  - D(\rho_r^{\rm TF}) + D(\eta_r^2 \rho_0- \rho_r^{\rm TF}) 
\end{align*}
and use the semi-classical estimate \eqref{eq:semi-lower} with $V= \varphi_r^{\rm TF}$. Note that by Newton's theorem,
\begin{align*}
 - \rho_r^{\rm TF}* (|x|^{-1} - |x|^{-1}* g_s^2) \le 0
\end{align*}
and
\begin{align*}
 \big[ \chi_r^+|\dotv|^{-1} - (\chi_r^+|\dotv|^{-1})* g_s^2 \big]_+(x) &\le  \big[ \chi_r^+|\dotv|^{-1} - \chi_{r+s}^+ (|\dotv|^{-1}* g_s^2) \big]_+(x)  \nn\\ 
 & =  (\chi_r^+(x) - \chi_{r+s}^+ (x))|x|^{-1}. 
\end{align*}
Therefore, when $Z_r \ge 0$ and $r\ge s$  we can bound
 \begin{align}\label{eq:V*g-V}
\left[ \varphi_r^{\rm TF} - \varphi_r^{\rm TF}*g_s^2 \right]_+(x) \le [Z_r]_+ (\chi_r^+(x) - \chi_{r+s}^+ (x))|x|^{-1}.
\end{align} 
%
%
Using the TF equation \eqref{eq:TFequation} and the TF kinetic energy bound \eqref{eq:rhotfrkin}, we get, similarly to \eqref{eq:rho/x-by-kinetic}, 
\begin{align} \label{eq:varphi-by-kinetic-b}
 \left( \int [\varphi_r^{\rm TF}]_+^{5/2}\right)^{3/5}\left( \int [\varphi_r^{\rm TF}-\varphi_r^{\rm TF}*g_s^2]_+^{5/2}\right)^{2/5} 
 &\le C \norm{\rho_r^{\rm TF}}_{L^{5/3}}[Z_r]_+  r^{4/5-1} s^{2/5} \nn \\
 & \le C [Z_r]_+^{12/5} r^{-1/5} s^{2/5} . 
\end{align}
Note that \eqref{eq:varphi-by-kinetic-b} holds independently of the sign of $Z_r$ since $[\varphi_r^{\rm TF}]_+=0$ if $Z_r \le 0$.  
Thus,
\begin{align} \label{eq:rHF-lower}
 \cE_r^{\rm RHF} (\eta_r \gamma_0 \eta_r) 
    &= \Tr ((-   \Delta - \varphi_r^{\rm TF})\eta_r \gamma_0 \eta_r)  - D(\rho_r^{\rm TF}) + D(\eta_r^2 \rho_0- \rho_r^{\rm TF})\nn\\
    & \ge - L_{\rm sc } \int[\varphi_r^{\rm TF}]^{5/2} - C s^{-2} \int \eta_r^2 \rho_0 - C [Z_r]_+^{12/5} s^{2/5} r^{-1/5}\nn \\
    & \qquad - D(\rho_r^{\rm TF}) + D(\eta_r^2 \rho_0- \rho_r^{\rm TF}) \nn \\
    & =  \cE^{\rm TF}_r(\rho_r^{\rm TF})   +  D(\eta_r^2 \rho_0- \rho_r^{\rm TF}) \nn \\
    & \qquad - C s^{-2} \int \eta_r^2 \rho_0 - C [Z_r]_+^{12/5} s^{2/5} r^{-1/5}.    
\end{align}
Putting together \eqref{eq:rHF}, \eqref{eq:rHF-upper}, \eqref{eq:rho/x-by-kinetic} and \eqref{eq:rHF-lower}, we obtain \eqref{eq:D-D-D}. 
\end{proof}

In order to translate \eqref{eq:D-D-D} into an $L^1$-estimate, we will need 

\begin{lemma} \label{lem:f*1/|x|} For every $f\in L^{5/3}(\R^3) \cap L^{1} (\R^3)$ and $x\in \R^3$, we have
$$
\left| \int_{|y|<r} f(y)  \d y \right| \le C  \|f\|_{L^{5/3}}^{5/6} D(f)^{1/12} r^{13/12}.
$$
\end{lemma}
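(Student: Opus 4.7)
The target bound $C\|f\|_{L^{5/3}}^{5/6} D(f)^{1/12} r^{13/12}$ arises as a weighted geometric mean of Hölder's inequality and a second, finer bound obtained by Sobolev duality and mollification. The plan is to derive both bounds separately and then combine them via a pointwise interpolation on $I := \int_{|y|<r} f(y)\,\d y$.

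First, Hölder's inequality with conjugate exponents $5/3$ and $5/2$ gives
$$|I| \le \|f\|_{L^{5/3}}\,\|\chi_{B_r}\|_{L^{5/2}} = C\,\|f\|_{L^{5/3}}\,r^{6/5}.$$
Second, for a smoothing scale $\delta \in (0,1]$, I pick a cutoff $\varphi_\delta$ with $\chi_{B_{r(1-\delta)}} \le \varphi_\delta \le \chi_{B_r}$ and $|\nabla\varphi_\delta| \le C/(\delta r)$, so that $\|\varphi_\delta\|_{\dot H^1}^2 \le Cr/\delta$ and $\|\chi_{B_r}-\varphi_\delta\|_{L^{5/2}} \le C(r^3\delta)^{2/5}$. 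Bounding $\int f\,\varphi_\delta$ by $\dot H^{-1}/\dot H^1$ duality (with the identification $D(f) = c\,\|f\|_{\dot H^{-1}(\R^3)}^2$ obtained from Plancherel and $\widehat{|x|^{-1}}\propto|\xi|^{-2}$) and the annular error $\int f(\chi_{B_r}-\varphi_\delta)$ by Hölder, and then optimizing over $\delta$, yields
$$|I| \le C\,\|f\|_{L^{5/3}}^{5/9}\,D(f)^{2/9}\,r^{8/9}.$$

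Third, writing $|I| = |I|^{5/8}\,|I|^{3/8}$ and applying the Hölder bound to the first factor and the mollification bound to the second gives
$$|I| \le \bigl(C\|f\|_{L^{5/3}} r^{6/5}\bigr)^{5/8}\bigl(C\|f\|_{L^{5/3}}^{5/9} D(f)^{2/9} r^{8/9}\bigr)^{3/8} = C\,\|f\|_{L^{5/3}}^{5/6}\,D(f)^{1/12}\,r^{13/12},$$
as one verifies by checking that the exponents of $\|f\|_{L^{5/3}}$, $D(f)$, and $r$ are respectively $5/8+(5/9)(3/8)=5/6$, $(2/9)(3/8)=1/12$, and $(6/5)(5/8)+(8/9)(3/8)=13/12$. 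The only nontrivial step is the mollification bound; everything else is routine Hölder and bookkeeping of exponents.
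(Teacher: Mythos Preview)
Your proof is correct and takes a genuinely different route from the paper's. The paper's argument is a two-line reduction: it invokes the pointwise potential estimate from \cite[Cor.~9.3]{Solovej-03},
\[
\left| \int_{|y|<|x|} \frac{f(y)}{|x-y|}\,\d y \right| \le C\,\|f\|_{L^{5/3}}^{5/6}\,\bigl(|x|\,D(f)\bigr)^{1/12},
\]
sets $x=r\nu$, and averages over $\nu\in\mathbb{S}^2$, so that by Newton's theorem the left side becomes $r^{-1}\bigl|\int_{|y|<r} f\bigr|$. Your argument is self-contained: you bypass the potential entirely and work directly with $I=\int_{B_r} f$, combining the crude H\"older bound with an $\dot H^{-1}/\dot H^1$ duality bound against a smoothed indicator, and then interpolating. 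Since Solovej's cited corollary is itself proved by a mollification-plus-Coulomb-norm argument of comparable length, the two approaches have similar total cost; yours avoids the detour through the potential, while the paper's route yields the stronger pointwise potential estimate as a byproduct.

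One small point worth making explicit: your optimization over $\delta\in(0,1]$ tacitly assumes the minimizer lies in that range. When it does not, i.e.\ when $D(f)^{1/2}\ge c\,\|f\|_{L^{5/3}}\,r^{7/10}$, the H\"older bound from Step~1 alone is already dominated by the target $\|f\|_{L^{5/3}}^{5/6}D(f)^{1/12}r^{13/12}$, so the conclusion still holds; it is worth saying so.
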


\begin{proof} From \cite[Cor. 9.3]{Solovej-03} (see also \cite[Lem. 18]{FraNamBos-16}) we have
$$
\left| \int_{|y|<|x|} \frac{f(y)}{|x-y|} \d y \right| \le C \|f\|_{L^{5/3}}^{5/6} (|x|D(f))^{1/12}, \quad \forall x\in \R^3.
$$
Choosing $x=r\nu$ and averaging over $\nu \in \mathbb{S}^2$, we get the conclusion. 
\end{proof}

We finish this section by proving some a-priori estimates for $\chi_r^+\rho_0$.

\begin{lemma}[A-priori estimates] \label{lem:a-priori} Assume that 
\bq \label{eq:int-rho-r-r/2}
|Z_r|\le Cr^{-3}, \quad \forall r \in (0,D]
\eq
for some $D\le 1$. Then
\begin{align}
\int \chi_r^+ \rho_0 &\le C r^{-3},\quad  \forall r\in (0,D],\label{eq:int-rho-4}\\
\int \chi_r^+ \rho_0^{5/3} &\le C r^{-7}, \quad  \forall r\in (0,D],\label{eq:int-rho-3}\\
\Tr(-\Delta \eta_r \gamma_0 \eta_r) & \le C r^{-7}, \quad \forall r \in (0,D].\label{eq:Tr-eta-int-chir}
\end{align}
Here the constants are independent of $D$ and the cut-off function $\eta_r$ satisfies \eqref{eq:def-eta-r} with $\lambda\in [r/2,1/2]$.
\end{lemma}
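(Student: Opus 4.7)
My plan is to establish the three bounds using the hypothesis $|Z_r|\le Cr^{-3}$ together with the exterior $L^1$-estimate (Lemma \ref{lem:L1-bound}) and the Thomas--Fermi machinery (Lemmas \ref{lem:TF} and \ref{lem:D-M-TF}). The underlying heuristic is that in the exterior region $\{|x|\ge r\}$, the electrons are close to the screened TF minimizer $\rho_r^{\rm TF}$, which by \eqref{eq:rhotfrkin} and the hypothesis has mass $[Z_r]_+\le Cr^{-3}$ and kinetic norm $\int(\rho_r^{\rm TF})^{5/3}\le C[Z_r]_+^{7/3}\le Cr^{-7}$, so these scales should propagate to $\chi_r^+\rho_0$ and $\eta_r\gamma_0\eta_r$ up to controllable errors.

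I would first establish \eqref{eq:int-rho-4} by applying Lemma \ref{lem:L1-bound} with $\lambda=1/2$ and $s$ of order $r$. The hypothesis gives $[Z_r]_+\le Cr^{-3}$, and since $r\le D\le 1$ the elementary terms $s+\lambda^{-2}s^{-1}+\lambda^{-1}$ are of size $O(r^{-1})$, dominated by $r^{-3}$. Using the global kinetic bound $\Tr(-\Delta\gamma_0)\le C(Z^{7/3}+N)$ combined with the IMS formula, the semi-classical correction terms involving $K=\Tr(-\Delta\eta_r\gamma_0\eta_r)$ are of bounded order when $s\sim r\le 1$. The annular term $f(r)-f((9/4)r)$, with $f(r):=\int\chi_r^+\rho_0$ decreasing, yields a geometric contraction of the form $f((9/4)r)\le\mu f(r)+C r^{-3}$ with $\mu=(C_0-1)/C_0<1$; iterating this along the finite geometric sequence of scales from $r$ up to $D$ and using the a-priori bound $f(D)\le N$ from Lemma \ref{lem:2Z} yields $f(r)\le Cr^{-3}$.

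For \eqref{eq:Tr-eta-int-chir}, I rearrange the argument already present in Step 3 of Lemma \ref{lem:D-M-TF}. The semi-classical lower bound \eqref{eq:rHF-lower}, combined with the identity
\begin{equation*}
\cE_r^{\rm RHF}(\eta_r\gamma_0\eta_r) = \Tr\bigl((-\Delta-\varphi_r^{\rm TF})\eta_r\gamma_0\eta_r\bigr)-D(\rho_r^{\rm TF})+D(\eta_r^2\rho_0-\rho_r^{\rm TF})
\end{equation*}
and the upper bound on $\cE_r^{\rm RHF}(\eta_r\gamma_0\eta_r)$ from Step 2, yields a direct bound on $\Tr(-\Delta\eta_r\gamma_0\eta_r)$. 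Under the hypothesis, each term on the right is $O(r^{-7})$ thanks to \eqref{eq:rhotfrkin}, \eqref{eq:int-rho-4} (applied with $s\sim r$), and Newton's theorem to handle the $\int\varphi_r^{\rm TF}\eta_r^2\rho_0$ cross term. The $L^{5/3}$ bound \eqref{eq:int-rho-3} then follows from the Lieb--Thirring inequality applied to the density matrix $\eta_r\gamma_0\eta_r$: since $\rho_{\eta_r\gamma_0\eta_r}=\eta_r^2\rho_0$ and $\eta_r\ge\chi_{(3/2)r}^+$, we get $\int\chi_{(3/2)r}^+\rho_0^{5/3}\le C\Tr(-\Delta\eta_r\gamma_0\eta_r)\le Cr^{-7}$, and rescaling gives \eqref{eq:int-rho-3}.

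The main obstacle is closing the iteration in \eqref{eq:int-rho-4}: the annular term in Lemma \ref{lem:L1-bound} is only weaker than $f(r)$ by a geometric factor, so the iteration must be arranged so that the geometric contraction strictly beats the growth of $r^{-3}$ along the $\log_c(D/r)$ iteration steps required to reach scale $D$. A cleaner alternative is a contradiction argument: assuming $f(r_0)\gg C r_0^{-3}$ for some $r_0\in(0,D]$, one iterates the contraction and derives a contradiction from $f(D)\le N$ once the iteration has been carried along enough dyadic scales.
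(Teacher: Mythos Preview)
Your scheme for \eqref{eq:int-rho-4} has two genuine gaps that prevent it from closing.

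First, the global kinetic bound is too weak in the relevant regime. The IMS formula gives at best $K:=\Tr(-\Delta\eta_r\gamma_0\eta_r)\le CZ^{7/3}+C(\lambda r)^{-2}N$, so with $s\sim r$ the semiclassical correction in Lemma~\ref{lem:L1-bound} is of order $(r^2 Z^{7/3})^{3/5}$. For $r=Z^{-1/3+\delta}$ this is $\sim Z^{1+7\delta/5}$, while $r^{-3}=Z^{1-3\delta}$; hence the kinetic term is \emph{not} $O(r^{-3})$ once $r\gg Z^{-1/3}$, which is precisely the range in which the hypothesis $|Z_r|\le Cr^{-3}$ carries nontrivial information.

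Second, your iteration goes in the wrong direction. From $f(r)\le C_0\bigl(f(r)-f((9/4)r)\bigr)+\text{(rest)}$ with $C_0>1$ one can only extract $f((9/4)r)\le \mu f(r)+\text{(rest)}$, $\mu=(C_0-1)/C_0$, which propagates information from \emph{small} scales to \emph{large} scales. Feeding in $f(D)\le N$ at the top does nothing; solving for $f(r)$ yields only a lower bound. The contradiction variant fails for the same reason.

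The paper repairs both issues simultaneously by coupling the two estimates \emph{before} inserting any a~priori kinetic bound. Taking $\gamma=0$ in \eqref{eq:rHF} gives $\cE_r^{\rm RHF}(\eta_r\gamma_0\eta_r)\le\cR$; together with the trivial lower bound on the reduced Hartree--Fock functional (Lieb--Thirring plus the Thomas--Fermi lower bound $\ge -C[Z_r]_+^{7/3}$) this yields
\[
K\le C\bigl(r^{-4}f(r)+r^{-7}\bigr),\qquad f(r):=\int\chi_r^+\rho_0,
\]
which is exactly the localized kinetic control you lack. Substituting this into Lemma~\ref{lem:L1-bound} with $s\sim r$ makes the kinetic terms $(r^2K)^{3/5}$ and $(r^2K)^{1/3}$ \emph{sublinear} in $f(r)$, hence absorbable by Young's inequality into $\epsilon f(r)+C_\epsilon r^{-3}$. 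What remains is
\[
f(r)\le C\bigl(Z_r-Z_{(1+\lambda)^2 r}\bigr)+Cr^{-3},
\]
and the annulus term on the right, read as a difference of \emph{screened charges} rather than of $f$-values, is bounded directly by the hypothesis $|Z_\bullet|\le C(\cdot)^{-3}$ at the two radii. No iteration is required. Once \eqref{eq:int-rho-4} is in hand, your arguments for \eqref{eq:Tr-eta-int-chir} and \eqref{eq:int-rho-3} are essentially those of the paper.
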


\begin{proof}
We can choose $\gamma=0$ in \eqref{eq:rHF} to get $\cE_r^{\rm RHF}(\eta_r \gamma_0 \eta_r)\le \cR.$ Using the kinetic Lieb--Thirring inequality and TF lower bound, we find that
\begin{align*}
\Tr(-\Delta \eta_r \gamma_0 \eta_r) & \le C  (\lambda r)^{-2} \Big(Z_{(1-\lambda)r} - Z_{r} +\int \chi_r^+ \rho_0 \Big)  +  C \lambda r^{1/2} [Z_{(1-\lambda)r}]_+^{5/2}  \nn\\
& \quad + C \Big( \Tr(-\Delta \eta_r \gamma_0 \eta_r) \Big)^{1/2} \Big( \int \chi_r^+ \rho_0 \Big)^{1/2} +  C [Z_r]_+^{7/3}.
\end{align*}
This bound, \eqref{eq:int-rho-r-r/2} and the choice $\lambda\ge r/2$ imply that
\bq
\Tr(-\Delta \eta_r \gamma_0 \eta_r) \le C \Big(r^{-4} \int \chi_r^+ \rho_0 + r^{-7} \Big), \quad \forall r \in (0,D].\label{eq:Tr-eta-int-chir1}
\eq
We recall that the estimate in Lemma~\ref{lem:L1-bound} with $\lambda \ge r/2=s/2$ gives
\begin{align*} 
 \int \chi^+_{r}\rho_0  &\le C  \int_{r<|x|<(1+\lambda)^2 r} \rho_0 + C \Big([Z_r]_+  +  r^{-3}\Big)  \\
 &\quad +C  \Big( r^2 \Tr (-\Delta \eta_r \gamma_0 \eta_r) \Big)^{3/5} + C \Big( r^2 \Tr (-\Delta \eta_r \gamma_0 \eta_r)\Big)^{1/3}.
\end{align*} 
By inserting \eqref{eq:int-rho-r-r/2} and \eqref{eq:Tr-eta-int-chir1} into this bound we deduce that  
\begin{align*} 
 \int \chi^+_{r}\rho_0  \le C  \big(Z_r - Z_{(1+\lambda)^2 r} \big)+ C r^{-3}, \quad \forall r \in (0,D].
\end{align*} 
We can replace $r$ by $(1+\lambda)^{-2}r$ and  use \eqref{eq:int-rho-r-r/2} to get \eqref{eq:int-rho-4}. Inserting \eqref{eq:int-rho-4} into \eqref{eq:Tr-eta-int-chir1} yields \eqref{eq:Tr-eta-int-chir}. Moreover, by \eqref{eq:Tr-eta-int-chir}  and  the kinetic Lieb--Thirring inequality, we have
\[
\int \chi_{2r}^+ \rho_0^{5/3} \le \int (\eta_{r}^2 \rho_0)^{5/3} \le C \Tr(-\Delta \eta_{r} \gamma_0 \eta_{r}) \le C r^{-7}, \quad \forall r \in (0,D].
\]
Replacing $r$ by $r/2$ we obtain \eqref{eq:int-rho-3}.
\end{proof}

\section{Proof of the main result}

Now we prove Theorem \ref{main}. Since the (usual) TF minimizer $\rho^{\rm TF}$ has total mass $Z$ \cite{LieSim-77b}, Theorem \ref{main} is a direct consequence of the following

\begin{theorem}[Comparison with TF density] \label{thm:comparison} There are universal constants $C>0, \eps>0$ such that for all $N\ge Z \ge 1$ and $r>0$,
\bq \label{eq:comparison}\left| \int_{|x|<r}\Big( \rho_0(x) - \rho^{\rm TF}(x)\Big) \d x \right|  \le C(1+ r^{-3+\eps}).
\eq
\end{theorem}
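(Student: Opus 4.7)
The plan is to follow Solovej's iterative strategy from reduced Hartree--Fock theory \cite{Solovej-91} and bootstrap the exterior $D$-comparison provided by Lemma \ref{lem:D-M-TF}. First, I would rewrite the integral as a difference of screened nuclear charges: since $\rho_0$ and $\rho^{\rm TF}$ are radial and satisfy $\int\rho_0=N$ and $\int\rho^{\rm TF}=Z$, one has
\[
\int_{|x|<r}\bigl(\rho_0(x)-\rho^{\rm TF}(x)\bigr)\,\d x \;=\; \widetilde Z_r - Z_r,
\qquad \widetilde Z_r:=\int_{|y|\ge r}\rho^{\rm TF}(y)\,\d y,
\]
and the Sommerfeld bound \eqref{eq:Sommerfeld-r0} immediately gives $\widetilde Z_r \le Cr^{-3}$ for every $r>0$. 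The theorem therefore reduces to $|Z_r-\widetilde Z_r|\le C(1+r^{-3+\eps})$.

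Next, I would set up the bootstrap. Lemma \ref{lem:a-priori} shows that on any interval $(0,D]\subset(0,1]$ on which the single hypothesis $|Z_s|\le Ks^{-3}$ holds, strong a-priori estimates for $\chi_r^+\rho_0$, $\chi_r^+\rho_0^{5/3}$ and $\Tr(-\Delta\eta_r\gamma_0\eta_r)$ follow with constants independent of $D$. I would let $D$ be the largest such radius for a chosen universal $K$ and iterate on a geometric sequence $r_{k+1}=\alpha r_k$ with $\alpha\in(0,1)$ inside $(0,D]$. At each step, Lemma \ref{lem:D-M-TF} with $s$ and $\lambda$ tuned to $r_k$ yields a quantitative bound on $D(\eta_{r_k}^2\rho_0-\rho_{r_k}^{\rm TF})$; combined with the $L^{5/3}$ control \eqref{eq:int-rho-3} and Lemma \ref{lem:f*1/|x|} at scale $(1+\beta)r_k$ for small $\beta>0$, this becomes an $L^1$-bound
\[
\Bigl|\int_{|y|<(1+\beta)r_k}\bigl(\eta_{r_k}^2\rho_0-\rho_{r_k}^{\rm TF}\bigr)\,\d y\Bigr| \le C r_k^{-3-\delta},\qquad \delta>0.
\]
The inner Sommerfeld estimate \eqref{eq:Sommerfeld} lets me replace $\rho_{r_k}^{\rm TF}$ by the universal $\rho^{\rm TF}$ modulo an error $Cr_k^{-3-\zeta}$, and re-expressing the integrand in terms of screened charges produces a recursive inequality
\[
|Z_{r_{k+1}}-\widetilde Z_{r_{k+1}}|\le(1-\tau)|Z_{r_k}-\widetilde Z_{r_k}| + Cr_k^{-3-\delta'},\qquad\tau>0,
\]
which iterates to $|Z_r-\widetilde Z_r|\le Cr^{-3+\eps}$ throughout $(0,D]$.

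Finally, I would close the bootstrap and handle $r>1$. Feeding $|Z_r-\widetilde Z_r|\le Cr^{-3+\eps}$ together with $\widetilde Z_r\le Cr^{-3}$ back into the defining hypothesis for $D$ gives $|Z_r|\le Cr^{-3}$ on $(0,D]$ with a constant strictly smaller than $K$; by maximality this forces $D=1$, so the estimate holds on all of $(0,1]$. For $r>1$, Lemma \ref{lem:L1-bound} applied with the now-established control at the unit scale produces $|Z_r|\le C$ uniformly, and together with $\widetilde Z_r\le C$ this supplies the $+1$ in the target bound. The main obstacle will be the tuning of $s$ and $\lambda$ in Lemma \ref{lem:D-M-TF} at each inductive step: every error term on its right-hand side---$Cs^{-2}\int\chi_r^+\rho_0$, $C[Z_r]_+^{12/5}r^{-1/5}s^{2/5}$, and the $\cR$ contributions involving $(\lambda r)^{-2}$ and $\lambda r^{1/2}[Z_{(1-\lambda)r}]_+^{5/2}$---must be shown to be of strictly lower order than $r^{-3+\eps}$ under the inductive hypothesis $|Z_r|\lesssim r^{-3}$, so that the contraction factor $1-\tau<1$ in the recursion is preserved and the iteration closes.
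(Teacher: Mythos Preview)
Your overall plan—bootstrap in Solovej's style using Lemma \ref{lem:D-M-TF} and Lemma \ref{lem:f*1/|x|}—is the right one, but the mechanism you describe has two genuine gaps.

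First, the recursion $|Z_{r_{k+1}}-\widetilde Z_{r_{k+1}}|\le(1-\tau)|Z_{r_k}-\widetilde Z_{r_k}|+Cr_k^{-3-\delta'}$ cannot be extracted from the available lemmas. Lemma \ref{lem:D-M-TF} controls $D(\eta_r^2\rho_0-\rho_r^{\rm TF})$ by quantities involving $[Z_r]_+$ and kinetic/mass integrals, not by $|Z_r-\widetilde Z_r|$ with a coefficient strictly below one; there is no contraction factor in the problem. What the paper actually uses is the identity
\[
Z_R^{\rm TF}-Z_R=\int_{|x|<R}(\rho_r^{\rm TF}-\chi_r^+\rho_0)+\int_{|x|\ge R}(\rho_r^{\rm TF}-\rho^{\rm TF}),\qquad r<R,
\]
bounding the two pieces separately: the first via Lemmas \ref{lem:D-M-TF} and \ref{lem:f*1/|x|}, the second via the Sommerfeld asymptotics \eqref{eq:Sommerfeld} for $\rho_r^{\rm TF}$ together with \eqref{eq:Sommerfeld-r0} for $\rho^{\rm TF}$ (both match the universal $|x|^{-6}$ profile). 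Once the a-priori hypothesis $|Z_s|\le Cs^{-3}$ is known on $(0,D]$, the improved bound $|Z_R-\widetilde Z_R|\le CR^{-3+\eps}$ follows \emph{directly} for $R$ in the enlarged interval $[D,D^{1-\delta}]$, choosing $r\sim D^{1/(1-\delta)}$. The iteration extends the range outward, it does not run inward along a geometric sequence, and there is no fixed-point structure. (Incidentally your error exponents $-3-\delta$, $-3-\zeta$ have the wrong sign; you need improvements of the form $-3+\eps$.)

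Second, you are missing the initial step. The trivial bound $|Z_r|\le N\le CZ$ guarantees $|Z_r|\le Kr^{-3}$ only for $r\lesssim Z^{-1/3}$, but it does not give $|Z_r-\widetilde Z_r|\le Cr^{-3+\eps}$ there, so your maximality argument has no base case. A separate comparison of $\rho_0$ with the full $\rho^{\rm TF}$ on all of $\R^3$ (running the semi-classical analysis of Lemma \ref{lem:D-M-TF} at $r=0$ with $V=\varphi^{\rm TF}$) is needed to obtain $D(\rho_0-\rho^{\rm TF})\le CZ^{25/11}$ and hence, via Lemma \ref{lem:f*1/|x|}, a bound $|Z_r-\widetilde Z_r|\le C(Z^{1/3}r)^a r^{-3+\eps}$ valid for $r\le Z^{-1/3}$. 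Finally, note that applying \eqref{eq:Sommerfeld} to $\rho_r^{\rm TF}$ requires a \emph{lower} bound $Z_r r^3\ge\kappa>0$, which is not part of your hypothesis $|Z_s|\le Ks^{-3}$; in the paper this comes from the inductive hypothesis $|Z_r-\widetilde Z_r|\le(a^{\rm TF}/2)r^{-3}$ combined with $\widetilde Z_r\ge(3a^{\rm TF}/4)r^{-3}$, another ingredient absent from your scheme.
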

Note that the left side of \eqref{eq:comparison} is $|Z_r-Z_r^{\rm TF}|$ where 
$$
Z_r^{\rm TF}:= Z - \int_{|x|<r} \rho^{\rm TF}(x) \d x = \int_{|x|\ge r} \rho^{\rm TF}(x) \d x. 
$$
Recall that by the Sommerfeld estimate  \eqref{eq:Sommerfeld-r0}, for all $r>0$ we have
\bq \label{eq:ZrTF} Z_r^{\rm TF} = a^{\rm TF}  r^{-3}\Big(1+O\big((rZ^{1/3})^{-\zeta}\big)\Big), \quad a^{\rm TF}:=\frac{4 (5c^{\rm TF})^3}{3\pi^2}.
\eq
Thus, \eqref{eq:comparison} tells us that the screened nuclear charge $Z_r$ can be approximated well by TF theory up to the distance $o(1)$, which is remarkably larger than the semi-classical distance $O(Z^{-1/3})$. 

We will prove Theorem \ref{thm:comparison} using a bootstrap argument as in \cite{Solovej-91}.
\begin{lemma}[Initial step] \label{lem:initial} There is a universal constant $C_1>0$ such that  
\bq \label{eq:comparison-1}
\left| \int_{|x|<r}\big( \rho_0(x) - \rho^{\rm TF}(x)\big) \d x \right|  \le C_1 (Z^{1/3}r)^{179/44} r^{-3+1/66}, \quad \forall r>0.
\eq
\end{lemma}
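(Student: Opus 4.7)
My plan rests on the algebraic identity $\tfrac{179}{44}-3+\tfrac{1}{66}=\tfrac{13}{12}$, so the claimed bound is equivalent to $|Z_r-Z_r^{\rm TF}| \le C_1 Z^{179/132} r^{13/12}$. The factor $r^{13/12}$ is precisely what Lemma~\ref{lem:f*1/|x|} produces at radius $r$. I therefore apply that lemma to $f = \rho_0 - \rho^{\rm TF}$, for which
\[ \int_{|y|<r} f(y)\,\d y = Z_r^{\rm TF} - Z_r, \]
reducing the problem to the two $r$-independent estimates
\[ \|\rho_0 - \rho^{\rm TF}\|_{L^{5/3}} \le CZ^{7/5} \qquad \text{and}\qquad D(\rho_0 - \rho^{\rm TF}) \le CZ^{25/11}, \]
since $\tfrac{5}{6}\cdot\tfrac{7}{5}+\tfrac{1}{12}\cdot\tfrac{25}{11}=\tfrac{179}{132}$. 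The first follows from the Lieb--Thirring inequality and the a-priori kinetic bound $\Tr(-\Delta \gamma_0) \le CZ^{7/3}$ (obtained by easy energy comparison, as in the proof of Lemma~\ref{lem:2Z}), together with the known $L^{5/3}$-norm of the TF minimizer. The regime $r \ge \kappa Z^{-1/3}$ is handled separately: there the target already dominates the trivial bound $|Z_r| + Z_r^{\rm TF} \le C(Z+r^{-3})$ (via Sommerfeld \eqref{eq:ZrTF}).

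For the $D$-bound I would fix an auxiliary scale $r_0 \sim Z^{-1/3}$---where the hypothesis of Lemma~\ref{lem:a-priori} holds since $|Z_{r_0}| \le N \le CZ \le C r_0^{-3}$---and decompose
\[ \rho_0 - \rho^{\rm TF} = A + B, \quad A = \eta_{r_0}^2\rho_0 - \rho_{r_0}^{\rm TF}, \quad B = (1-\eta_{r_0}^2)\rho_0 - (\rho^{\rm TF} - \rho_{r_0}^{\rm TF}). \]
Minkowski's inequality for the Coulomb semi-norm $D^{1/2}$ yields $D(\rho_0 - \rho^{\rm TF})^{1/2} \le D(A)^{1/2} + D(B)^{1/2}$. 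The first term is bounded directly by Lemma~\ref{lem:D-M-TF} (with $r$ replaced by $r_0$) after substituting the a-priori bounds of Lemma~\ref{lem:a-priori} and optimizing over $s \in (0, r_0]$ and $\lambda \in (0,1/2]$. For the second, I would control $D((1-\eta_{r_0}^2)\rho_0)$ via Hardy--Littlewood--Sobolev combined with H\"older interpolation between $\|\rho_0\|_{L^{5/3}} \le CZ^{7/5}$ and the compact support in $\{|y| \le (1+\lambda)r_0\}$, and bound $D(\rho^{\rm TF} - \rho_{r_0}^{\rm TF})$ by splitting it into the interior piece $\rho^{\rm TF}\,\1(|y|<r_0)$ (controlled by the explicit near-nucleus profile $\rho^{\rm TF}(y) \sim (Z/|y|)^{3/2}$) and the exterior Sommerfeld remainder $|\rho^{\rm TF} - \rho_{r_0}^{\rm TF}|(y) \le C|y|^{-6}(|y|Z^{1/3})^{-\zeta}$ on $\{|y| \ge r_0\}$, via \eqref{eq:Sommerfeld}--\eqref{eq:Sommerfeld-r0}.

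\medskip

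\noindent\textbf{Main obstacle.} The delicate bound on $D(B)$---especially the Coulomb self-energy of the Sommerfeld remainder on $\{|y|\ge r_0\}$, which must be bounded using the algebraic decay rate $\zeta=(\sqrt{73}-7)/2$ rather than by mass alone---together with the multi-parameter optimization in $s$, $\lambda$, and $r_0$ needed to reach a $D$-bound of order $Z^{25/11}$, form the technical heart of the proof. The unusual fractional exponents $179/44$ and $1/66$ arise as the precise balance dictated by this optimization and are presumably tailored to the bootstrap argument that follows.
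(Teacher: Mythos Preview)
Your reduction is exactly right: the claimed bound is equivalent to $|Z_r-Z_r^{\rm TF}|\le C_1 Z^{179/132}r^{13/12}$, and via Lemma~\ref{lem:f*1/|x|} this follows from $\|\rho_0-\rho^{\rm TF}\|_{L^{5/3}}\le CZ^{7/5}$ (which you handle correctly) together with
\[
D(\rho_0-\rho^{\rm TF})\le CZ^{25/11}.
\]
The gap is in your route to this $D$-bound. In your decomposition $\rho_0-\rho^{\rm TF}=A+B$ with $r_0\sim Z^{-1/3}$, the term $B=(1-\eta_{r_0}^2)\rho_0-(\rho^{\rm TF}-\rho_{r_0}^{\rm TF})$ has Coulomb self-energy of order $Z^{7/3}$, not $Z^{25/11}$. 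Indeed, each of $(1-\eta_{r_0}^2)\rho_0$ and $\rho^{\rm TF}\1_{\{|x|<r_0\}}$ separately carries Coulomb energy $\sim Z^{7/3}$ (this is the full TF scale, since essentially all of the mass and energy live at distance $\sim Z^{-1/3}$), and you propose to bound them \emph{separately} via Hardy--Littlewood--Sobolev and the near-nucleus profile. No amount of optimization in $s$, $\lambda$, $r_0$ repairs this: to beat $Z^{7/3}$ you would need cancellation between the inner M\"uller and inner TF densities, which is precisely the unknown quantity.

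The paper avoids this by not going through the exterior machinery at all. It simply reruns the semi-classical argument of Lemma~\ref{lem:D-M-TF} \emph{at $r=0$}, comparing $\gamma_0$ directly with the full TF minimizer: writing $\cE^{\rm M}(\gamma)=\cE^{\rm RHF}(\gamma)-X(\gamma^{1/2})$ gives
\[
\cE^{\rm RHF}(\gamma_0)\le \inf_{0\le\gamma\le1,\ \Tr\gamma\le N}\cE^{\rm RHF}(\gamma)+X(\gamma_0^{1/2}),
\]
and the upper/lower semi-classical bounds with $V=\varphi^{\rm TF}=Z|x|^{-1}-\rho^{\rm TF}*|x|^{-1}$ yield, after the single modification
\[
\int[\varphi^{\rm TF}-\varphi^{\rm TF}*g_s^2]_+^{5/2}\le Z^{5/2}\int\big(|x|^{-1}-|x|^{-1}*g_s^2\big)^{5/2}\le CZ^{5/2}s^{1/2},
\]
the estimate
\[
D(\rho_0-\rho^{\rm TF})\le \int_{|x|\le s}\frac{Z}{|x|}\rho^{\rm TF}+Cs^{-2}N+C\|\rho^{\rm TF}\|_{L^{5/3}}Zs^{1/5}+X(\gamma_0^{1/2}).
\]
Using $N\le CZ$, $\|\rho^{\rm TF}\|_{L^{5/3}}^{5/3}\le CZ^{7/3}$, $X(\gamma_0^{1/2})\le CZ^{5/3}$ and optimizing in $s$ (the balance $s^{-2}Z\sim Z^{12/5}s^{1/5}$ gives $s\sim Z^{-7/11}$) produces $D(\rho_0-\rho^{\rm TF})\le CZ^{25/11}$ in one step. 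Your detour through $r_0$, Lemma~\ref{lem:a-priori}, and the Sommerfeld remainder is not needed here and, as explained, does not close.
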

\begin{proof} By writing $\cE^{\rm M}(\gamma)=\cE^{\rm RHF}(\gamma)-X(\gamma^{1/2})$, we obtain 
$$
\cE^{\rm RHF}(\gamma_0) \le \inf \{ \cE^{\rm RHF}(\gamma): 0\le \gamma \le 1, \Tr \gamma \le N\}  + X(\gamma_0^{1/2}).
$$
Then we can use the semi-classical analysis as in the proof of Lemma \ref{lem:D-M-TF} (now with $r=0$ and $V(x) = \varphi^{\rm TF}(x) := Z|x|^{-1}-\rho^{\rm TF}*|x|^{-1} \ge 0$). The only difference is that instead of \eqref{eq:V*g-V} we use  
\begin{align*}
\int [\varphi^{\rm TF} - \varphi^{\rm TF}*g_s^2]_+^{5/2} \le Z^{5/2} \int \Big(|x|^{-1} - |x|^{-1}* g_s^2\Big)^{5/2} \le CZ^{5/2} s^{1/2}.
\end{align*}
We thus obtain 
$$
D(\rho_0- \rho^{\rm TF}) \le \int_{|x|\le s}  \frac{Z}{|x|}\rho^{\rm TF} + C s^{-2} N + C \| \rho^{\rm TF}\|_{L^{5/3}} Zs^{1/5} + X(\gamma_0^{1/2}).
$$
Using the a-priori estimates
$$N\le CZ, \quad \int (\rho^{\rm TF})^{5/3} \le CZ^{7/3}, \quad X(\gamma_0^{1/2}) \le CZ^{5/3} $$
and optimizing over $s>0$, we get $D(\rho_0- \rho^{\rm TF}) \le CZ^{25/11}.$ The desired estimate \eqref{eq:comparison-1} then follows from Lemma \ref{lem:f*1/|x|}. \end{proof}
\begin{lemma}[Iterative step] \label{thm:screened-it} There are universal constants $C_2, \delta, \eps>0$ such that, if for some $D\le 1$
\bq \label{eq:assume-D}
\left| \int_{|x|<r}\big( \rho_0(x) - \rho^{\rm TF}(x)\big) \d x \right|  \le (a^{\rm TF}/2) r^{-3}, \quad \forall r \in (0,D],
\eq
then  
\bq \label{eq:assume-D-it}
\left| \int_{|x|<r}\big( \rho_0(x) - \rho^{\rm TF}(x)\big) \d x \right| \le C_2 r^{-3+\eps}, \quad \forall r\in [D, D^{1-\delta}].
\eq
\end{lemma}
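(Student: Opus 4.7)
The plan is to apply the M\"uller--TF comparison of Lemma \ref{lem:D-M-TF} at a single scale $r_0:=D/2$, convert the resulting Coulomb-energy bound on $f:=\eta_{r_0}^2\rho_0-\rho_{r_0}^{\rm TF}$ into an $L^1$-estimate on balls of radius $r'\in[D,D^{1-\delta}]$ via Lemma \ref{lem:f*1/|x|}, and finally pass from $\rho_{r_0}^{\rm TF}$ to the full TF density $\rho^{\rm TF}$ using the Sommerfeld asymptotics \eqref{eq:Sommerfeld}--\eqref{eq:Sommerfeld-r0}. The positive exponent $\eps$ arises from the cost factor $D(f)^{1/12}$ in Lemma \ref{lem:f*1/|x|}: a gain $\beta>0$ in the Coulomb bound $D(f)\le Cr_0^{-7+\beta}$, extracted by tuning the parameters $s$ and $\lambda$ in Lemma \ref{lem:D-M-TF}, translates into a gain $\beta/12$ in the $L^1$ estimate.

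The hypothesis \eqref{eq:assume-D} and the Sommerfeld expansion \eqref{eq:ZrTF} give $|Z_r|\le Cr^{-3}$ on $(0,D]$, so Lemma \ref{lem:a-priori} yields the a-priori bounds $\int\chi_r^+\rho_0\le Cr^{-3}$, $\int\chi_r^+\rho_0^{5/3}\le Cr^{-7}$ and $\Tr(-\Delta\eta_r\gamma_0\eta_r)\le Cr^{-7}$ on $(0,D]$. Applying Lemma \ref{lem:D-M-TF} at $r_0=D/2$ with $s=r_0 t$ and $\lambda=r_0^{\alpha}$, and using these a-priori bounds together with the H\"older improvement $\int_{(1-\lambda)r_0\le|x|\le(1+\lambda)r_0}\rho_0\le Cr_0^{-3}\lambda^{2/5}$ (from the $L^{5/3}$-bound and the shell volume $\lambda r_0^3$), the leading terms on the right-hand side of \eqref{eq:D-D-D} are $s^{-2}r_0^{-3}$ and $[Z_{r_0}]_+^{12/5}r_0^{-1/5}s^{2/5}$; balancing them with $t=r_0^{5/6}$ and checking (say $\alpha$ near $10/13$) that the remainder $\cR$ is dominated, I obtain
\begin{equation*}
D(\eta_{r_0}^2\rho_0-\rho_{r_0}^{\rm TF})\le Cr_0^{-7+\beta},\qquad \beta=1/3.
\end{equation*}

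Plugging $\|f\|_{L^{5/3}}\le Cr_0^{-21/5}$ and this bound into Lemma \ref{lem:f*1/|x|} gives
\begin{equation*}
\Big|\int_{|y|<r'}f\Big|\le C r_0^{-49/12+\beta/12}\,(r')^{13/12}\qquad\text{for all }r'>0.
\end{equation*}
For $r'\ge(1+\lambda)r_0$ one has the algebraic identity
\begin{equation*}
Z_{r'}^{\rm TF}-Z_{r'}=\int_{|x|<r'}f+\int_{|x|\ge r'}(\rho^{\rm TF}-\rho_{r_0}^{\rm TF})+\cA,
\end{equation*}
where $\cA:=\int_{r_0\le|x|\le(1+\lambda)r_0}(1-\eta_{r_0}^2)\rho_0$ is bounded by $Cr_0^{-3}\lambda^{2/5}$ via the a-priori estimate, while integrating the Sommerfeld tails \eqref{eq:Sommerfeld} and \eqref{eq:Sommerfeld-r0} yields $|\int_{|x|\ge r'}(\rho^{\rm TF}-\rho_{r_0}^{\rm TF})|\le C(r')^{-3}[(r'/r_0)^{-\zeta}+(r' Z^{1/3})^{-\zeta}]$. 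With $r_0=D/2$ and $r'\in[D,D^{1-\delta}]$, matching these three contributions against $C(r')^{-3+\eps}$ reduces to the two constraints $\eps\le\zeta\delta$ (to beat the Sommerfeld error) and $\eps+49\delta/12\le\beta/12$ (to beat the output of Lemma \ref{lem:f*1/|x|}), both of which can be satisfied simultaneously for small positive $\eps,\delta$.

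The main obstacle is the joint tuning of these small exponents. The Coulomb gain $\beta/12=1/36$ is tiny, and enlarging the range $[D,D^{1-\delta}]$ via $\delta$ simultaneously weakens the Lemma \ref{lem:f*1/|x|} output (by the factor $(r'/r_0)^{13/12}$) and worsens the Sommerfeld error $(r'/r_0)^{-\zeta}$; reconciling them gives only the small window $\delta=\beta/(12\zeta+49)$, $\eps=\zeta\delta$. A secondary but important point is that the Sommerfeld threshold $|x|/r_0\ge\alpha(\kappa)$ in Lemma \ref{lem:TF} must be met with constants independent of $Z$, which forces the use of the $Z$-uniform bound \eqref{eq:Sommerfeld-r0} on the full $\rho^{\rm TF}$ rather than only the exterior Sommerfeld estimate \eqref{eq:Sommerfeld}.
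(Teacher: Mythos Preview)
Your overall strategy matches the paper's: use the key identity
\[
Z_{r'}^{\rm TF}-Z_{r'} = \int_{|x|<r'}(\rho_{r_0}^{\rm TF}-\chi_{r_0}^+\rho_0) + \int_{|x|\ge r'}(\rho_{r_0}^{\rm TF}-\rho^{\rm TF}),
\]
control the first term via Lemma~\ref{lem:D-M-TF} and Lemma~\ref{lem:f*1/|x|} with gain $\beta/12=1/36$, and control the second via Sommerfeld. The parameter choices $s\sim r_0^{11/6}$ and $\lambda\sim r_0^{1/3}$ (your $\alpha$ is different but gives the same $\beta$) are fine, as is your handling of $\cA$.

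The gap is the choice $r_0=D/2$. Your constraint ``$\eps\le\zeta\delta$ to beat the Sommerfeld error'' only checks the \emph{upper} endpoint $r'=D^{1-\delta}$. The ratio $(r'/r_0)^{-\zeta}/(r')^{\eps}$ is \emph{decreasing} in $r'$, so the worst case is the \emph{lower} endpoint $r'=D$, where $r'/r_0=2$ and hence the Sommerfeld contribution is $\sim 2^{-\zeta}D^{-3}$. This cannot be bounded by $C_2 D^{-3+\eps}$ with a universal $C_2$ once $D$ is small. In other words, a fixed multiplicative separation between $r_0$ and $D$ gives no power gain. The paper fixes this by taking $r_0=L^{-1}D^{1/(1-\delta)}$, so that $r'/r_0\ge L\,D^{-\delta/(1-\delta)}$ for every $r'\in[D,D^{1-\delta}]$; then $(r'/r_0)^{-\zeta}\le C D^{\zeta\delta/(1-\delta)}$ is an honest positive power of $D$, and the constraint $\eps\le\zeta\delta/(1-\delta)$ becomes satisfiable uniformly on the whole interval.

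A second point you skipped: with this smaller $r_0$ one needs $r_0\ge LZ^{-1/3}$ (so that $Z_{r_0}r_0^3\ge\kappa$ and the Sommerfeld estimate \eqref{eq:Sommerfeld} applies with uniform constants). This forces a case distinction: if $D$ is so small that $r_0<LZ^{-1/3}$, the paper covers $[D,D^{1-\delta}]$ directly by the initial step (Lemma~\ref{lem:initial}), choosing $\delta,\eps$ small enough that \eqref{eq:comparison-1} already gives the claimed bound on that range.
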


\begin{proof} Let $R\ge D\ge r$.
Since $\int  \rho_r^{\rm TF} = Z_r$ (see Lemma \ref{lem:TF}), we have the key identity
\begin{align} \label{eq:key-identity}
\int_{|x|<R} \left( \rho^{\rm TF} - \rho_0\right) & = Z_r - \int_{|x|< R} \chi_r^+ \rho_0 - \int_{|x|\ge R} \rho^{\rm TF} \nn\\
&=\int_{|x|< R} (\rho_r^{\rm TF} -\chi_r^+ \rho_0) + \int_{|x|\ge R} (\rho_r^{\rm TF} - \rho^{\rm TF}) .
\end{align}

In order to estimate the first term on the right hand side, we start by establishing \eqref{eq:int-rho-r-r/2}.
If $r\geq Z^{-1/3}$ we estimate $|Z_r| \leq |Z_r^{\rm TF}| + |Z_r-Z_r^{\rm TF}|$ and bound $\abs{Z_r^{\rm TF}}$ by \eqref{eq:ZrTF} (this requires $r\geq Z^{-1/3}$) and use the assumption \eqref{eq:assume-D}. On the other hand, if $r\leq Z^{-1/3}$, the bound $|Z_r|\leq C Z \leq C r^{-3}$ follows by our a-priori bound in Lemma \ref{lem:2Z}. This proves \eqref{eq:int-rho-r-r/2}.

Therefore, we can apply Lemma \ref{lem:a-priori} and we obtain the bounds \eqref{eq:int-rho-4}, \eqref{eq:int-rho-3} and \eqref{eq:Tr-eta-int-chir}. We want to use these in Lemma \ref{lem:D-M-TF} to bound $D(\eta_r^2 \rho_0- \rho_r^{\rm TF})$. First, we use Lemma \ref{lem:a-priori} to obtain 
\begin{align*}
\cR \le C(r^{-3} +\lambda^{-2} r^{-5} + \lambda r^{-7} + r^{-5}) \le C(\lambda^{-2} r^{-5} + \lambda r^{-7}) .
\end{align*}
For the other error terms in Lemma~\ref{lem:D-M-TF}, we choose $s= r^{11/6}$, so that
\[
C |Z_r|^{12/5} r^{-1/5} s^{2/5} + C s^{-2} \int \chi_r^+ \rho_0  \le C r^{-7+1/3}
\]
and finally obtain
\[
D(\eta_r^2 \rho_0- \rho_r^{\rm TF}) \le C r^{-7} \left( r^{1/3} +  \lambda^{-2} r^2 + \lambda \right).
\]

Inserting this bound, as well as \eqref{eq:int-rho-3}, \eqref{eq:rhotfrkin} and \eqref{eq:int-rho-r-r/2}, into Lemma \ref{lem:f*1/|x|}, we obtain 
$$
\left|\int_{|x|\le R} \Big(\eta_r^2 \rho_0- \rho_r^{\rm TF} \Big) \right|\le Cr^{-3} \left( r^{1/3} +  \lambda^{-2} r^2 + \lambda \right)^{1/12} (R/r)^{13/12}.
$$
Moreover, from \eqref{eq:int-rho-3}  and $0 \le \chi_r^+- \eta_r^2 \le \1(r\le |x| \le (1+\lambda)r)$ it follows that
$$ \int |\chi_r^+  \rho_0-\eta_r^2\rho_0| \le \|\chi_r^+ \rho_0\|_{L^{5/3}} \|\chi_r^+-\eta_r^2\|_{L^{5/2}} \le C r^{-3}\lambda^{2/5}.$$
Combining these estimates and choosing $\lambda=r^{1/3}$, we conclude that
\bq \label{eq:r<x<R}
\left|\int_{|x|\le R} \Big(\chi_r^+ \rho_0- \rho_r^{\rm TF} \Big) \right|\le Cr^{-3+1/36} (R/r)^{13/12}.
\eq

Next, we use the Sommerfeld asymptotics to bound $(\rho_r^{\rm TF}-\rho^{\rm TF})$ on  $\{|x|\ge R\}$. We will assume that $R\ge Lr \ge L^2Z^{-1/3}$ for a universal constant $L>0$ to be determined. From \eqref{eq:ZrTF}, by choosing $L>0$ large enough, we have
$$
Z^{\rm TF}_r \ge (3a^{\rm TF}/4)r^{-3}, \quad \forall r\ge L Z^{-1/3}.
$$
Combining this bound with \eqref{eq:assume-D} we infer that
$$
Z_r \ge (a^{\rm TF}/4)r^{-3}, \quad \forall r\ge L Z^{-1/3}.
$$
Because of this we can, after increasing $L$ if necessary, apply the Sommerfeld estimate \eqref{eq:Sommerfeld} and deduce that
$$
\left|\int_{|x|\ge R} \rho_r^{\rm TF}(x) \d x - a^{\rm TF} R^{-3} \right| \le CR^{-3} (R/r)^{-\zeta}, \quad \forall R \ge Lr.
$$
Combining the latter estimate and \eqref{eq:ZrTF} (with $r$ replaced by $R$), we finally obtain
\bq \label{eq:x>R-TF}
\left|\int_{|x|\ge R}  \Big(\rho_r^{\rm TF} -  \rho^{\rm TF} \Big) \right| \le CR^{-3} (R/r)^{-\zeta}, \quad \forall R \ge Lr.
\eq

Now let us conclude. From the bound \eqref{eq:comparison-1} in the initial step, by choosing universal constants $\delta>0$ and $\eps>0$ small enough we have
\begin{align*}
\left| \int_{|x|<r}\big( \rho_0 - \rho^{\rm TF}\big) \right|   \le Cr^{-3+\eps}, \quad \forall r\le (L^2Z^{-1/3})^{(1-\delta)^{2}}.
\end{align*}
Therefore, \eqref{eq:assume-D-it} holds true if $D \le (L^2Z^{-1/3})^{1-\delta}$. It remains to consider the case $(L^2Z^{-1/3})^{1-\delta}\le D \le 1$. We choose $r=L^{-1} D^{(1-\delta)^{-1}}$. Then we have $D\ge L r \ge L^2 Z^{-1/3}$. Therefore, for all $R\in [D,D^{1-\delta}]$, by inserting \eqref{eq:r<x<R} and \eqref{eq:x>R-TF} into \eqref{eq:key-identity}, we get
$$
\left|\int_{|x|< R}  \Big(\rho_0 -  \rho^{\rm TF} \Big) \right| \le  Cr^{-3+1/36} (R/r)^{13/12} + CR^{-3} (R/r)^{-\zeta}.
$$
Using $r\ge L^{-1}R^{(1-\delta)^{-2}}$ we  deduce that 
$$
\left|\int_{|x|< R}  \Big(\rho_0 -  \rho^{\rm TF} \Big) \right| \le  CR^{-3+\eps}, \quad \forall R\in [D,D^{1-\delta}]
$$
if the universal constants $\delta>0$ and $\eps>0$ are chosen small enough. 
\end{proof}

Now we are ready to provide the

\begin{proof}[Proof of Theorem \ref{thm:comparison}] We use the notations in  Lemma \ref{lem:initial} and Lemma \ref{thm:screened-it}. Let $\beta=a^{\rm TF}/2$ and $C_0=\max\{C_1,C_2,a^{\rm TF}\}$. The constant $\eps>0$ in Lemma \ref{thm:screened-it} can be chosen to satisfy $\eps\le 1/66$. Let $D_n:=Z^{-\frac{1}{3}(1-\delta)^n}$ for $n=0,1,2, \ldots $

From Lemma \ref{lem:initial}, we have
\bq \label{eq:initial-conclude}
\left| \int_{|x|<r}\big( \rho_0(x) - \rho^{\rm TF}(x)\big) \d x \right|  \le C_0 r^{-3+\eps}, \quad \forall r\in (0,D_0].
\eq
On the other hand, from Lemma \ref{thm:screened-it}, we deduce by induction that if 
$$C_0 (D_n)^\eps \le \beta,$$
then 
$$
\left| \int_{|x|<r}\big( \rho_0(x) - \rho^{\rm TF}(x)\big) \d x \right|  \le C_0 r^{-3+\eps}, \quad \forall r\in (0,D_{n+1}].
$$
Note that $D_n\to 1$ as $n\to\infty$ and that $C_0>\beta$. Thus, there is a minimal $n_0\in \{0,1,2,\ldots\}$ such that $C_0 (D_{n_0})^\eps>\beta$. If $n_0\ge 1$, then $C_0 (D_{n_0-1})^\eps\le \beta$ and therefore by the preceding argument
\bq \label{eq:Dn0}
\left| \int_{|x|<r}\big( \rho_0(x) - \rho^{\rm TF}(x)\big) \d x \right|  \le C_0 r^{-3+\eps}, \quad \forall r\in (0,D_{n_0}].
\eq
If $n_0=0$, then \eqref{eq:Dn0} reduces to \eqref{eq:initial-conclude}. Since $D_{n_0}\ge D:= (C_0^{-1} \beta)^{1/\eps}$, we have
\bq \label{eq:D}
\left| \int_{|x|<r}\big( \rho_0(x) - \rho^{\rm TF}(x)\big) \d x \right|  \le C_0 r^{-3+\eps}, \quad \forall r\in (0,D].
\eq

Note that $D\in (0,1]$ is a universal constant. Using the exterior estimates \eqref{eq:int-rho-4} and \eqref{eq:ZrTF} with $r=D$, we get
\bq \label{eq:D1}
\int_{|x|\ge D} \big( \rho_0(x)+ \rho^{\rm TF}(x) \big) \d x \le C.
\eq
From \eqref{eq:D} and \eqref{eq:D1}, we obtain \eqref{eq:comparison}. 
\end{proof}


\bibliographystyle{amsalpha}

\end{document}